\let\mathcal \undefined
\def\mathcal{\mathscr}
\let\emptyset \undefined
\let\ge    \undefined
\let\le    \undefined
\let\geq\ge
\theoremstyle{plain}
\newtheorem{theorem}{Theorem}[section]
\theoremstyle{remark}
\newtheorem{remark}[theorem]{Remark}
\newtheorem{example}[theorem]{Example}
\newtheorem{definition}[theorem]{Definition}
\theoremstyle{plain}
\newtheorem{corollary}[theorem]{Corollary}
\newtheorem{lemma}[theorem]{Lemma}
\newtheorem{proposition}[theorem]{Proposition}
\numberwithin{equation}{section}
\def\N{{\mathbb N}}
\def\Z{{\mathbb Z}}
\def\R{{\mathbb R}}
\def\C{{\mathbb C}}
\newcommand{\eps}{\varepsilon}
\newcommand{\la}{\lambda}
\newcommand{\om}{\omega}
\newcommand{\Om}{\Omega}
\renewcommand{\P}{{\mathbb P}}
\renewcommand{\S}{{\mathbb S}}
\newcommand{\T}{{\mathbb T}}
\newcommand{\calA}{\mathscr{A}}
\newcommand{\calB}{\mathscr{B}}
\newcommand{\calF}{\mathscr{F}}
\newcommand{\calK}{\mathscr{K}}
\newcommand{\calL}{\mathscr{L}}
\newcommand{\calM}{\mathscr{M}}
\newcommand{\calP}{\mathscr{P}}
\newcommand{\calS}{\mathscr{S}}
\newcommand{\calT}{\mathscr{T}}
\newcommand{\Ker}{{\mathsf{N}}}
\newcommand{\Ran}{{\mathsf{R}}}
\newcommand{\n}{\Vert}
\newcommand{\iprod}[2]{( #1|#2 )}
\newcommand{\one}{{{\bf 1}}}
\newcommand{\wh}{\widehat}
\newcommand{\ud}{\,{\rm d}}
\newcommand{\Proj}{{\rm Proj}}
\newcommand{\ProjH}{{\rm Proj}(H)}
\newcommand{\SH}{\calS(H)}
\newcommand{\SHP}{\SH/\!{}_P}
\newcommand{\SM}{\calS(\calM)}
\newcommand{\simM}{\sim_{\!\calM}}
\newcommand{\simP}{\sim_{_P}}
\newcommand{\SHM}{\SH/\!{}_{\calM}}
\newcommand{\ESH}{\calP(H)}
\newcommand{\ESHP}{\ESH/\!{}_P}
\newcommand{\ESHM}{\ESH/\!{}_\calM}
\newcommand{\tauM}{\tau}
\newcommand{\tauMt}{\tau_{/\!\calM}}
\begin{document}

\title[Indiscernibility]{Indiscernibility of quantum states}

\author{Jan van Neerven \& Marijn Waaijer}

\address{Delft Institute of Applied Mathematics\\
Delft University of Technology\\P.O. Box 5031\\2600 GA Delft\\The Netherlands}

\email{j.m.a.m.vanneerven@tudelft.nl, waaijermarijn@gmail.com}

\date{\today}
\keywords{Indiscernibility, states, observables, Holevo space, hidden variables, projection-valued measures, von Neumann algebra, free particle, qubit, EPR experiment, Bell experiment}

\subjclass[2020]{Primary: 81P16, Secondary: 46L10, 47B15, 81P05, 81Q10}

\begin{abstract}
 This paper provides a systematic study of the operational idea that a quantum ``state'' is only defined up to what can be distinguished by a chosen family of observables. Concretely, any von Neumann algebra of observables $\calM$ induces an equivalence relation on pure and mixed states by declaring two preparations indiscernible when they give identical statistics for every observable in $\calM$. The corresponding quotient, the \emph{Holevo space} associated with $\calM$, is the effective (relational) state space of the experiment, explicitly dependent on the observer’s available measurements.

We analyse the resulting geometry and topology of these quotients, and prove a context-complete classical representation theorem: for every von Neumann algebra $\calM$ there is a canonical lift $a\mapsto \wh a$ to bounded continuous functions on the Holevo space, reproducing expectation values pointwise. In the commutative case this reduces to ordinary probability theory on the joint spectrum.

The framework is illustrated in explicit examples, including position measurements of a free particle and polarisation measurements in the qubit, EPR, and Bell settings. In particular, in the EPR scenario Charlie’s joint observable defines a simplex of joint outcome distributions, whereas the Alice/Bob marginal viewpoint collapses the effective description to a lower-dimensional space by ``forgetting'' the correlation parameter. We show that by varying the polariser settings, the indiscernibility classes become conjugated (and generically reshuffled), and different settings are typically incompatible at the level of observable algebras.
\end{abstract}

\maketitle

\section{Introduction}

In this paper, we present a systematic study of the operational idea that a ``state'' is only defined up to what can be distinguished by the measurements one allows. Concretely, any given set (or algebra) of observables $\calM$ induces an equivalence relation on quantum states: for an observer with access to $\calM$, two preparations are operationally the ``same'' if they give identical statistics for every measurement built from $\calM$ (Definition~\ref{def:indiscernible}).
The corresponding quotient of the (pure or mixed) state space is the {\em Holevo space} associated with $\calM$ (Section~\ref{sec:Holevo}); it is the effective state space of the experiment and depends on the observational standpoint encoded by $\calM$.

In our setup, an observer's accessible information is encoded by the abelian von Neumann algebra $\calM$ generated by commuting observables representing what that observer can jointly measure.
For every measurement context $\calM$ there is a canonical ``classicalisation'' which assigns to each quantum observable $a\in\calM$ a bounded continuous function $\wh a$ on an explicit Hausdorff space of equivalence classes $[h]$, reproducing expectation values pointwise,
$$\wh a([h])=\iprod{ah}{h}$$ (Theorem~\ref{thm:hidden}).
This map becomes particularly insightful in multi-observer scenarios.
Different observers typically have access to different algebras of observables, and hence to different notions of ``sameness''.
In this sense, the Holevo space is a genuinely relational state space: it does not encode an observer-independent description, but exactly the empirical content of the system relative to the specified measurement context.

A concrete illustration is provided by the EPR setting (Section \ref{subsec:C-EPR}). Defining the $n$-simplex $$\Delta^n := \Bigl\{x\in\R_+^{n+1}:\ \sum_{j=1}^{n+1}x_j = 1\Bigr\},$$ the Holevo space for Charlie’s joint observable recording outcome pairs can be represented as the three-parameter Holevo space $\Delta^3$. In contrast, the Holevo space for the combined marginal perspectives of Alice and Bob gives the two-parameter space
$\Delta^1\times \Delta^1$ (Section \ref{subsec:C-EPR}, \ref{subsec:AB-EPR}). This is a natural result, given the fact that $\Delta^1$ is the Holevo space for a polarisation measurement on a single photon (Section \ref{subsec:Holevo-qubit}). Geometrically, passing from Charlie to the Alice/Bob marginals amounts to forgetting precisely the correlation parameter.

The Bell experiment highlights how changing the measurement context reshuffles the relational state space. For each fixed polariser setting $(\gamma_A,\gamma_B)$, Charlie's Holevo space is again $\Delta^3$ and the equivalence relation is obtained by conjugation with the local rotation $R^{\gamma_A}\otimes R^{\gamma_B}$ (Theorem~\ref{thm:hide-2-intrinsic}).
However, the underlying equivalence classes depend strongly on the setting:
different settings are generically incompatible at the level of observable algebras (Theorem~\ref{thm:incompatibility}).
These results visualise the contextuality of the Bell scenario geometrically in terms of how ``effective state spaces'' vary with the measurement perspective.

The map $\wh a$ sends a state $\varrho$ to the list of numbers
$\tr(a\varrho)$ with $a\in\calM$, and thus
encodes the ``access through commutative algebras'' in the sense of the Bohrification programme \cite[Introduction]{Landsman}.
This programme takes at its point of departure that, although a quantum algebra of observables is typically noncommutative, any concrete observational access to it comes through a commutative subalgebra (a ``classical context'') corresponding to a given choice of compatible measurements.
In this light, the incompatibility result of Theorem~\ref{thm:incompatibility}
provides a concrete, experiment-level motivation for Bohrification: there simply is no single commutative algebra that contains all the relevant contexts -- ``classical descriptions'' are necessarily contextual.

\smallskip
The paper is organised as follows. After a brief discussion of observables and states in Section \ref{sec:observales-states}, in Section \ref{sec:indiscernible} we introduce the equivalence relation of indiscernibility and characterise its classes in terms of commutants and unitaries.  In Section \ref{sec:Holevo} we define the Holevo space of equivalence classes.  The ``classicalisation'' construction is carried out in Section \ref{sec:hidden}.  Sections \ref{sec:examples}, \ref{sec:free-particle}, \ref{sec:EPR_and_Bell} illustrate the theory in key examples, including finite commuting families and position observables for a free particle, the qubit, and the EPR and Bell experiments.

\medskip\noindent
{\em Notation and conventions.}
Throughout this paper, $H$ is a nonempty (possibly finite-dimensional) separable complex Hilbert space with inner product $\iprod{\,\cdot\,}{\,\cdot\,}$ linear in the first argument and sesquilinear in the second.  We write $\calL(H)$ for the space of bounded operators on $H$, $\Proj(H)$ for the lattice of its orthogonal projections, and $\mathscr S(H)$ and $\mathscr{P}(H)$ for the convex set of mixed states and its subset of pure states, respectively. We use the conventions $\N = \{0,1,2,\dots\}$ and $\R_+ = [0,\infty)$.

\section{Observables and states}\label{sec:observales-states}

In the mathematical formulation of quantum mechanics, states and observables are defined as follows.

\begin{definition}[States and observables]
Let $H$ be a separable complex Hilbert space.

\begin{itemize}
 \item A {\em state} is a positive trace class operator on $H$ with unit trace.
 \item A {\em pure state} is an extreme point of the convex set of all states.
 \item An {\em observable} is a projection-valued measure with values in $\Proj(H)$.
 \item An {\em algebra of observables} is a von Neumann algebra $\calM \subseteq \calL(H)$.
\end{itemize}
\end{definition}

Every family of observables defines an algebra of observables, namely the von Neumann algebra generated by the orthogonal projections in the range of the projection-valued measures involved. In the converse direction, it is a standard result in the theory of von Neumann algebras that every algebra of observables $\calM$ is generated by the orthogonal projections belonging to $\calM$. This justifies the slight abuse of terminology in the third and fourth items in the above definition.

Let us comment on these definitions in more detail.

\subsection{Observables}
In the above definition of an observable, we allow projection-valued measures to live on an arbitrary measurable space $(\Om,\calF)$, which we think of as the ``outcome space''. The precise definition is as follows.

\begin{definition}[Projection-valued measures]\index{projection-valued measure}
Let $H$ be a Hilbert space. A {\em projection-valued measure} in $H$, defined on a measurable space $(\Om,\calF)$,
is a mapping $$P: \calF\to \ProjH$$ that assigns to every $F\in\calF$ an orthogonal projection $P_F:=P(F)$ on $H$ such that:
\begin{enumerate}[label={\rm(\roman*)}, leftmargin=*]
 \item\label{it:PVM1} $P_\Om = I$;
 \item\label{it:PVM2} for all $h\in H$, the mapping $F\mapsto \iprod{P_F h}{h}$
 defines a measure on $(\Om,\calF)$.
\end{enumerate}
\end{definition}
Projection-valued measures enjoy the following properties (see \cite[Section 9.2]{Nee}):

\begin{itemize}
 \item $P_{\emptyset} = 0$.
 \item $P_{F_1\cap F_2} = P_{F_1} P_{F_2} = P_{F_2}P_{F_1}$ for all $F_1,F_2\in \calF$.
\end{itemize}
In particular, if $F_1,F_2\in \calF$ are disjoint,
the ranges of $P_{F_1}$ and $P_{F_2}$ are orthogonal.

By the spectral theorem, real-valued observables (that is, projection-valued measures on the real line $\R$) are in one-to-one correspondence with (possibly unbounded) selfadjoint operators; this establishes the connection with the common textbook definition of (real-valued) observables as selfadjoint operators. In the same way, projection-valued measures on the unit circle $\T$ and on the complex plane $\C$ are in one-to-one correspondence with, respectively, unitary operators and (possibly unbounded) normal operators. In each of these cases, the projection-valued measure corresponding to an operator $A$ is supported on its spectrum $\sigma(A)$. Moreover, $A$ is bounded if and only if $\sigma(A)$ is bounded.

Projection-valued measures permit us to interpret
unitary operators as observables with values in the unit circle $\T$, and (possibly unbounded) normal operators as complex-valued observables. This point of view is developed systematically in \cite[Chapter 15]{Nee}. In certain settings, it is of interest to consider other outcome sets. For instance, in the context of EPR and Bell state experiments where two observers Alice and Bob perform polarisation measurements on an entangled two-photon pair, the `correct' outcome space for the joint measurements collected by Charlie is the set $\{+,-\}\times \{+,-\}$ (see Sections \ref{subsec:Holevo-EPR},  \ref{subsec:Holevo-Bell}). Alice and Bob, who observe only their respective photon, use the outcome space $\{+,-\}$.

Every family of observables defines an algebra of observables, namely the von Neumann algebra generated by the orthogonal projections in the range of the projection-valued measures involved, and conversely every algebra of observables $\calM$ is generated by the orthogonal projections belonging to $\calM$.

\subsection{States}
Let us now comment on the definitions of states and pure states. First of all, our definition of ``state'' corresponds to the notion of ``normal state'' in the literature on operator algebras.
If $\varrho\in \SH$ is a state, then
\begin{align}\label{eq:normal-state}
\om(a):= \tr(a \varrho)
\end{align}
defines a normal positive functional $\om:\calL(H)\to \C$ satisfying $\om(I) = 1$. Conversely, if $\om:\calL(H)\to \C$ is a normal positive functional satisfying $\om(I) = 1$,
there exists a unique $\varrho\in \SH$ such that \eqref{eq:normal-state} holds.

The sets of states and pure states on the Hilbert space $H$ are denoted by
$\SH$ and $\ESH$, respectively.
It is a standard result that a state is pure if and only if it is a {\em vector state}, i.e., a rank one projection of the form
$$h\,\bar \otimes\, h: \ h'\mapsto \iprod{h'}{h}h,$$
where $h\in H$ is a unit vector. Note that $h$ and its scalar multiple $ch$ define the same pure state if $|c|=1$. Thus we may identify pure states with unit vectors in $H$, provided we identify
$h$ and $ch$ when $|c|=1$. In the physics literature, the notation
$$\ket{h}$$
is used to denote the equivalence class of $h$; the rank one projection $h\,\bar\otimes\, h'$ is usually denoted as
$$ \ket{h}\bra{h}.$$
In this context it is always implicit that the vector $h$ has norm one.

\section{Indiscernibility}\label{sec:indiscernible}

The informal approach to indiscernibility of the Introduction can be made precise as follows.

\begin{definition}[Indiscernibility]\label{def:indiscernible}
Let $\varrho,\varrho'\in \SH$ be states.
\begin{itemize}
\item If $P=\{P^{(i)}:\,i\in I\}$ is a family of projection-valued measures on measurable spaces $(\Om^{(i)},\calF^{(i)})$, the states $\varrho$ and $\varrho'$ are called {\em indiscernible for $P$}, notation $$ \varrho \, \sim_{P} \, \varrho',$$
if for all $i\in I$ and $F\in \calF^{(i)}$ we have
\begin{align}\label{def-indisc-PVM}
\tr \bigl(P^{(i)}_F \varrho\bigr) =  \tr \bigl(P^{(i)}_F \varrho'\bigr).
\end{align}

\item If $\calM$ is an algebra of observables, the states $\varrho$ and $\varrho'$ are called {\em indiscernible for $\calM$}, notation
$$  \varrho \, \simM  \, \varrho',$$
if for all $a\in \calM$ we have
\begin{align}\label{def-indisc-M}
\tr (a \varrho) = \tr (a\varrho').
\end{align}
\end{itemize}
\end{definition}

If $\ket{h}$ is a pure state associated with the rank one projection $h\,\bar\otimes\,h\in \SH$, then
$ \tr (a \circ (h\,\bar\otimes\, h)) = \iprod{ah}{h}$. Accordingly,
for pure states $\ket{h}$ and $\ket{h'}$, \eqref{def-indisc-PVM} and \eqref{def-indisc-M}
reduce to
$$ \iprod{P^{(i)}_F h}{h} =  \iprod{P^{(i)}_F h'}{h'}$$
and
$$ \iprod{ah}{h} = \iprod{ah'}{h'},$$
respectively.

\subsection{First examples}
We discuss various interesting examples in Section \ref{sec:examples}. For now, we content ourselves with the two most trivial ones:

\begin{example}\label{ex:indist-min}
Every two distinct states are indiscernible for $\calM = \C I$.
\end{example}

The physics interpretation is that if we are only able to ask the question ``Is there a state?'', we cannot tell distinct states apart.

The interpretation of the next example is that if, for every pure state $\ket{h}$, we can answer the question ``Are you the pure state $\ket{h}$?'', we can tell any pair of different pure states apart.

\begin{example}\label{ex:indist-max}
If $\calM$ is a von Neumann algebra containing all rank--one orthogonal projections, then $\calM = \calL(H)$. We claim that the states $\varrho$ and $\varrho'$ are indiscernible for $\calM$ if and only if they are equal: $\varrho =\varrho'$.
The `if' part is trivial, and the `only if' part follows from the fact that the space $\calT(H)$ of all trace class operators is the dual of $\calK(H)$ by trace duality.
\end{example}

\begin{example}[Indiscernibility with respect to a spin observable]\label{ex:spin}
Let $A$ be a selfadjoint operator on a Hilbert space $H$ with spectrum
$$\sigma(A)=\{-1,1\}.$$
Then the pure states $\ket{h}$ and $\ket{h'}$ are indiscernible under the PVM associated with $A$ if and only if
$$
\iprod{Ah}{h} = \iprod{Ah'}{h'}.
$$
The `only if' part is clear.
To prove the `if' part, let $P$ be the PVM associated with $A$. The assumptions imply that $P$ is supported on $\{-1,1\}$. Setting
$p(h):=\iprod{P_{\{1\}}h}{h}$, we have
$\iprod{P_{\{-1\}}h}{h} = 1-p(h)$ and therefore
$$\iprod{A h}{h} = (+1)\,p(h) + (-1)\,[1-p(h)] = 2p(h)-1.$$
Thus if $\iprod{Ah}{h} = \iprod{Ah'}{h'}$, then $p(h) = p(h')$ and $\iprod{P_{\{\pm 1\}}h}{h} = \iprod{P_{\{\pm 1\}}h'}{h'}$.
\end{example}

Let $\calM'$ denote the {\em commutant} of $\calM$, that is,
$$\calM':= \{b\in \calL(H):\, ab = ba\, \hbox{ for all } \,a\in \calM\}.$$
We further write, for $h\in H$, $$\calM h := \{ah:\, a\in\calM\}.$$

\subsection{Indiscernibility for projection-valued measures}\label{subsec:indisc-PVM}

Let $P$ be a PVM on $(\Om,\calF)$, and let $\calM$ denote the von Neumann algebra generated by the projections in the range of $P$.
The following fact will be used repeatedly without further comment.

\begin{proposition}\label{prop:PVM-vN-indisc}
Let $P:\calF\to\Proj(H)$ be a projection-valued measure on $(\Omega,\calF)$ and let
$\calM$ be the von Neumann algebra generated by its range.
For states $\varrho,\varrho'\in\SH$ the following are equivalent:
\begin{enumerate}[\rm(i)]
\item $\varrho$ and $\varrho'$ are indiscernible for $P$;
\item $\varrho$ and $\varrho'$ are indiscernible for $\calM$.
\end{enumerate}
\end{proposition}

\begin{proof}
(ii)$\Rightarrow$(i): immediate since $P_F\in\calM$ for every $F\in\calF$.

\smallskip
(i)$\Rightarrow$(ii): Let $\calA$ be the unital $*$-algebra generated by $\{P_F:\,F\in\calF\}$.
Since the projections $P_F$ commute and $P_{F_1}\cdots P_{F_n}=P_{\cap_{j=1}^n F_j}$,
every $b\in\calA$ is a finite linear combination of projections from the range of $P$.
By linearity and assumption (i) we obtain
\begin{equation}\label{eq:expectation-on-A-mixed}
\tr(a\varrho)=\tr(a\varrho')\quad\text{for all }a\in\calA.
\end{equation}

Now fix $a\in\calM$. By Kaplansky's density theorem there exists a net $(a_\alpha)\subseteq\calA$
with $\sup_{\alpha}\|a_\alpha\|\le \|a\|$ and $\lim_{\alpha} a_\alpha = a$ strongly.
We claim that for every trace class operator $S\in\calT(H)$,
\begin{equation}\label{eq:trace-strong}
\tr(a_\alpha S)\to \tr(aS).
\end{equation}
Assuming this claim for the moment, we apply it with $S=\varrho$ and $S=\varrho'$.
From \eqref{eq:expectation-on-A-mixed} we have $\tr(a_\alpha\varrho)=\tr(a_\alpha\varrho')$ for all $\alpha$,
and passing to the limit using \eqref{eq:trace-strong} gives $\tr(a\varrho)=\tr(a\varrho')$.
Since $a\in\calM$ was arbitrary, (ii) follows.

\smallskip
Proof of the claim \eqref{eq:trace-strong}: Fix $S\in\calT(H)$ and $\eps>0$.
Choose a finite rank operator $S_0$ such that $\|S-S_0\|_1<\eps/(4\|a\|)$, where $\|\cdot\|_1$ is the trace norm; such an operator exists by the density of finite rank operators.
Then, using that $|\tr(TU)|\le \|T\|\,\|U\|_1$, we have
\[
|\tr((a_\alpha-a)(S-S_0))|
\le \|a_\alpha-a\|\,\|S-S_0\|_1
\le 2\|a\|\,\|S-S_0\|_1
< \eps/2.
\]
Next, write $S_0$ as a finite sum of rank--one operators,
$S_0=\sum_{k=1}^n u_k\,\bar\otimes\, v_k$. Then $\tr(T(u_k\,\bar\otimes\, v_k))=(Tu_k|v_k)$
and
\[
\tr((a_\alpha-a)S_0)=\sum_{k=1}^n ( (a_\alpha-a)u_k \,|\, v_k).
\]
Since $a_\alpha\to a$ strongly, there exists $\alpha_0$ such that
$|\tr((a_\alpha-a)S_0)|<\eps/2$ for all $\alpha\ge \alpha_0$.
Combining the two estimates gives $|\tr((a_\alpha-a)S)|<\eps$ for $\alpha\ge\alpha_0$, thus proving \eqref{eq:trace-strong}.
\end{proof}

When the von Neumann algebra $\calM$ is generated by the projections in a family of PVMs $P=\{P^{(i)}:\, i\in I\}$ on measurable spaces $(\Om^{(i)},\calF^{(i)})$, it is tempting to test indiscernibility for $\calM$ by just checking the equalities
$$\iprod{P^{(i)}_F h}{h} = \iprod{P^{(i)}_F h'}{h'}, \quad F\in \calF^{(i)}$$ for all individual PVMs $P^{(i)}$ in this family.
However, this only controls the marginal measurement statistics of each PVM separately. Indiscernibility for $\calM$ also requires
the equality
$$
 \iprod{P^{(i_1)}_{F_1}\cdots P^{(i_n)}_{F_n}h}{h}
 = \iprod{P^{(i_1)}_{F_1}\cdots P^{(i_n)}_{F_n}h'}{h'},$$
for all $i_1,\dots,i_n\in I$ with $F_k\in \calF^{(i_k)}$ for all $k=1,\dots,n$.
Physically, if all PVMs commute the above products correspond to joint events; for noncommuting families, these products are merely elements of the generated algebra and should not be interpreted as joint outcome probabilities.

As the following example shows,
two pure states may be indiscernible for a family of (commuting) PVMs, but discernible for the (abelian) von Neumann algebra $\calM$ generated by the projections in this family: the states may still be discerned by $\calM$ through a product of projections.

\begin{example}
\label{ex:counterexample_prop_indisc_specmeas}
Let $H=\C^2\otimes\C^2$, and let $P_3$ be the PVM of the {\em Pauli-$z$ matrix}
$$\sigma_3 = \begin{pmatrix}\, 1 & \, 0\, \\ 0 & -1 \end{pmatrix}. $$
Consider the commuting PVMs on $\{-1,1\}$, the spectrum of $\sigma_3$, which is uniquely defined by specifying their values on  singletons by
\[
P_A(\{\eps\}) := P_3(\{\eps\})\otimes I, \qquad
P_B(\{\eps\}) := I\otimes P_3(\{\eps\}), \qquad \eps\in\{-1,1\},
\]
where $P_3(\{1\}) = \ket{0}\bra{0}$ and $P_3(\{-1\}) = \ket{1}\bra{1}$,
and let $\calM$ be the von Neumann algebra generated by their ranges (the diagonal
algebra with respect to the standard basis $\{\ket{00},\ket{01},\ket{10},\ket{11}\}$ of $\C^2\otimes\C^2$).

Define the two Bell states
\[
\ket{h}:=\frac{1}{\sqrt2}(\ket{00}+\ket{11}), \qquad
\ket{h'}:=\frac{1}{\sqrt2}(\ket{01}+\ket{10}).
\]
Then for each $\eps\in\{-1,1\}$ we have
\begin{align*}
\iprod{P_A(\{\eps\})h}{h}
& =
\iprod{P_A(\{\eps\})h'}{h'}
=\frac12,
\\
\iprod{P_B(\{\eps\})h}{h}
& =
\iprod{P_B(\{\eps\})h'}{h'}
=\frac12,
\end{align*}
so $h$ and $h'$ have identical statistics for each observable separately.

However, the product of commuting projections
$P_A(\{1\})P_B(\{1\})=P_3(\{1\})\otimes P_3(\{1\})=\ket{00}\bra{00}$
belongs to $\calM$, and
\[
\iprod{(\ket{00}\bra{00})h}{h}=\frac12,
\qquad
\iprod{(\ket{00}\bra{00})h'}{h'}=0.
\]
Hence $\ket{h}$ and $\ket{h'}$ are discernible for $\calM$.
\end{example}

This example will be pivotal in the final section on the EPR and Bell experiments.
In the EPR experiment, the polarisation measurements by Alice and Bob on their respective photons can be modelled as the PVMs $P_A$ and $P_B$.
As long as Alice and Bob have access only to their separate marginal statistics, they cannot distinguish pure states $\ket{h}$ and $\ket{h'}$ on the basis of their measurement statistics if and only if $\ket{h}$ and $\ket{h'}$ are indiscernible for the pair $\{P_A,P_B\}$, i.e.,
$$\iprod{P_A(\{\eps\})h}{h} = \iprod{P_A(\{\eps\})h'}{h'}, \quad
\iprod{P_B(\{\eps\})h}{h} = \iprod{P_B(\{\eps\})h'}{h'},
$$ for all $\eps \in \{-1,1\}$.
The von Neumann algebra generated by ranges of $P_A$ and $P_B$ contains the range of the product PVM $P_C = P_A\times P_B$ which an external observer Charlie can use to analyse the measurement statistics of the {\em pairs of outcomes} generated by Alice and Bob. The above example shows that Charlie can distinguish certain pairs of pure states that are indiscernible for Alice and Bob on the basis of their measurement statistics.

\subsection{Indiscernibility for algebras of observables}

The following result states that the equivalence classes of pure states under indiscernibility for an algebra of observables $\calM$ are precisely the 'orbits' under the set of partial isometries commuting with the elements of $\calM$. An elementary proof of this fact (which can alternatively be deduced from the uniqueness of GNS representations) is included for the convenience of the reader.

Recall that an operator $V \in \calL(H)$ is a {\em partial isometry} if there is an orthogonal decomposition $H = H_0\oplus H_1$ such that the restricted operator $V|_{H_0}$ is an isometry (i.e., $\n Vh_0\n = \n h_0\n$ for all $h_0\in H_0$)
and $V|_{H_1} = 0$. In this situation, $H_0$ is called the {\em initial space} of $V$ and the (closed) range of $V|_{H_0}$ is called the {\em final space} of $V$.

\begin{theorem}[Indiscernibility and partial isometries]\label{thm:indisc-partialiso}
Let $\calM$ be a von Neumann algebra acting on $H$. For unit vectors $h,h'\in H$
the following assertions are equivalent:
\begin{enumerate}[\rm(1)]
 \item\label{it:indisc-psi} $\iprod{ah}{h}=\iprod{ah'}{h'}$ for all $a\in\calM$;
 \item\label{it:partialiso} there exists a partial isometry $V\in\calM'$ such that
 $Vh=h'$.
\end{enumerate}
In this situation, the operator $V$ can be chosen to satisfy
$$ V^*V=P_{\overline{\calM h}},\qquad VV^*=P_{\overline{\calM h'}},$$
the orthogonal projections in $H$ onto $\overline{\calM h}$ and $\overline{\calM h'}$, respectively.
\end{theorem}

\begin{proof}
\ref{it:indisc-psi}$\Rightarrow$\ref{it:partialiso}:\
Assume that $\iprod{ah}{h}=\iprod{ah'}{h'}$ for all $a\in\calM$.
For all $a\in\calM$ we have $a^*a\in\calM$ and consequently
\[
\|ah\|^2=\iprod{a^*ah}{h}=\iprod{a^*ah'}{h'}=\|ah'\|^2,
\]
so that the linear mapping $\calM h\to \calM h'$ given by $ah\mapsto ah'$ is
well-defined and isometric, and by continuity it extends uniquely to a unitary
operator $U_0$ from $\overline{\calM h}$ onto $\overline{\calM h'}$ satisfying
$$ U_0(ah)=ah'\ \ \text{for all }a\in\calM.$$
In particular, $U_0h=h'$.
Moreover, for every $a\in\calM$ and every $x\in \overline{\calM h}$ we have
the intertwining relation
\begin{equation}\label{eq:U0-intertwine}
U_0ax=aU_0x.
\end{equation}
Indeed, this is clear if $x = bh \in \calM h$, and this extends to general $x\in \overline{\calM h}$ by approximation.

Next, we check that $\overline{\calM h}$ and $\overline{\calM h'}$ are invariant under $\calM$.
The invariance of $\overline{\calM h}$ is clear since $\calM h$ is $\calM$-invariant.
If $g\in \overline{\calM h}^\perp$ and $b\in\calM$, then for all $a\in\calM$,
$$\iprod{ah}{bg}=\iprod{b^*ah}{g}=0$$
since $b^*a\in\calM$ and $g\perp \calM h$,
Hence, $bg\in \overline{\calM h}^\perp$. This shows that also $\overline{\calM h}^\perp$ is invariant under $\calM$. It follows that $P_{\overline{\calM h}}\in\calM'$. The same argument applies to $\overline{\calM h'}$, hence also $P_{\overline{\calM h'}}\in\calM'$.

Now, define an operator $V\in\calL(H)$ by
\[
V:=U_0 P_{\overline{\calM h}}.
\]
Then $V$ is a partial isometry with initial space $\overline{\calM h}$ and final space $\overline{\calM h'}$, so
\[
V^*V=P_{\overline{\calM h}},\qquad VV^*=P_{\overline{\calM h'}}.
\]
Moreover, $Vh=U_0h=h'$. The identity $V^*V=P_{\overline{\calM h}}$ then implies that also $V^* h'=h$.

It remains to show that $V\in\calM'$. Fix $a\in\calM$ and $x\in H$. Using that
$P_{\overline{\calM h}}\in\calM'$, we have $P_{\overline{\calM h}}ax=aP_{\overline{\calM h}}x\in \overline{\calM h}$, so by \eqref{eq:U0-intertwine},
\[
Va
=U_0 P_{\overline{\calM h}}a
=U_0 a P_{\overline{\calM h}}
=a U_0 P_{\overline{\calM h}}
=aV.
\]
Thus $Va=aV$ for all $a\in\calM$, i.e., $V\in\calM'$.

\smallskip
\ref{it:partialiso}$\Rightarrow$\ref{it:indisc-psi}:\
Let $V\in\calM'$ be a partial isometry with $Vh=h'$.
Since $V$ is a partial isometry,
there exists an orthogonal decomposition $H=H_0\oplus H_1$ such that
$V|_{H_0}$ is an isometry and $V|_{H_1}=0$. Let $P$ denote the orthogonal projection
of $H$ onto $H_0$. For $x,y\in H$ write $x=x_0+x_1$, $y=y_0+y_1$ with $x_i,y_i\in H_i$.
Since $Vx=Vx_0$ and $Vy=Vy_0$, and $V|_{H_0}$ is an isometry, we obtain
\[
\iprod{Vx}{Vy}=\iprod{Vx_0}{Vy_0}=\iprod{x_0}{y_0}=\iprod{Px}{y}.
\]
and therefore $V^*V= P$. In particular, $I-V^*V\ge 0$. Moreover,
\[
\iprod{(I-V^*V)h}{h} =\iprod{h}{h}-\iprod{V^*Vh}{h} = \|h\|^2-\|Vh\|^2.
\]
But $\|Vh\|=\|h'\|=1=\|h\|$, so $\iprod{(I-V^*V)h}{h}=0$. By positivity, this implies $(I-V^*V)h=0$, i.e.\ $V^*Vh=h$.
Now for every $a\in\calM$, using $V^*V\in\calM'$ we obtain
\[
V^*V(ah)=a(V^*Vh)=ah,
\]
so $V^*V$ acts as the identity on $\calM h$, and therefore on $\overline{\calM h}$.
In particular, $V^*h'=V^*Vh=h$.

For any $a\in\calM$, using $V\in\calM'$ we then obtain
\[
\iprod{ah'}{h'}
=\iprod{ah'}{Vh}
=\iprod{V^*ah'}{h}
=\iprod{aV^*h'}{h}
=\iprod{ah}{h}.
\]
This proves indiscernibility.
\end{proof}

If $h$ and $h'$ are {\em cyclic} for $\calM$ (which means that $\calM h$ and $\calM h'$ are dense in $H$), then $P_{\overline{\calM h}}=P_{\overline{\calM h'}}=I$, so the partial isometry $V$
satisfies $V^*V=VV^*=I$ and $V$ is unitary.
The following uniqueness result implies that in this situation the unitary operator $V$ is also unique:

\begin{proposition}[Uniqueness on $\overline{\calM h}$]\label{prop:partial-uniqueness}
Let $\calM$ be a von Neumann algebra acting on $H$, and let $h,h'\in H$ be unit
vectors.
Then every partial isometry $U\in\calM'$ satisfying $Uh=h'$ maps $\overline{\calM h}$ onto
$\overline{\calM h'}$.
Moreover, if $U,V\in\calM'$ are partial isometries satisfying
\[Uh=h'=Vh,\]
then
\[U|_{\overline{\calM h}}=V|_{\overline{\calM h}}.\]
\end{proposition}

\begin{proof}
Let $U\in\calM'$ be a partial isometry with $Uh=h'$. For every $a\in\calM$ we have $U(ah) = aUh = ah'$, so $U(\calM h)=\calM h'$.

Since $U$ is a partial isometry, $P:=U^*U$ is an orthogonal projection (the initial
projection of $U$), and hence $I-P$ is an orthogonal projection. Moreover,
\[
\|(I-P)h\|^2 = \iprod{(I-P)h}{h}=\|h\|^2-\|Uh\|^2=1-1=0.
\]
This implies $(I-P)h=0$, i.e.\ $Ph=h$. Since $P\in\calM'$ and $a\in\calM$,
we obtain
\[
P(ah)=a(Ph)=ah \qquad (a\in\calM),
\]
so $P$ acts as the identity on $\calM h$, and therefore on $\overline{\calM h}$.
In particular, $\overline{\calM h}\subseteq \Ran(P)$, and $U$ restricts to an isometry on
$\overline{\calM h}$. Hence $U(\overline{\calM h})$ is closed. Together with
$U(\calM h)=\calM h'$ this yields
\[
U(\overline{\calM h})
=\overline{U(\overline{\calM h})}
=\overline{U(\calM h)}
=\overline{\calM h'}.
\]
This proves that $U$ maps $\overline{\calM h}$ onto
$\overline{\calM h'}$.

Now, let $U,V\in\calM'$ be partial isometries with $Uh=Vh=h'$. Then for every $a\in\calM$ we have $U(ah) = aUh = ah' = aVh = V(ah).$ Therefore, $U$ and $V$ agree on the dense subspace $\calM h$ of $\overline{\calM h}$,
and by continuity, they agree on $\overline{\calM h}$.
\end{proof}

The construction in the proof of Theorem \ref{thm:indisc-partialiso}
always gives a partial isometry $V\in\calM'$ with $V^*V=P_{\overline{\calM h}}$ and $VV^*=P_{\overline{\calM h'}}$.
Such $V$ can be extended to a \emph{unitary} $U\in\calM'$ with $Uh=h'$ provided
there exists a partial isometry $W\in\calM'$ with
\[
W^*W=I-P_{\overline{\calM h}},\qquad WW^*=I-P_{\overline{\calM h'}}.
\]
Indeed, the initial projections $V^*V$ and $W^*W$ are orthogonal and sum to $I$,
and similarly the final projections $VV^*$ and $WW^*$ are orthogonal and sum to $I$.
Hence,
\[
V^*W=(V^*V)V^*W(W^*W)=0,\qquad
VW^*=(VV^*)VW^*(WW^*)=0,
\]
and similarly $W^*V=0$ and $WV^*=0$.
Therefore, with $U:=V+W$ we have
\[
U^*U = V^*V + W^*W = I,\qquad
UU^* = VV^* + WW^* = I,
\]
so $U$ is unitary. Since $\calM'$ is a $*$-algebra, $U\in\calM'$.
Finally, $Uh=Vh=h'$ because $h\in\overline{\calM h}$ and $W$ vanishes on
$\overline{\calM h}$.

For indiscernible $h,h'\in H$, a unitary $U\in \calM'$ satisfying $Uh=h'$ need not always exist.
This will be shown in Example \ref{ex:no-unitary}, the construction of which relies on the following lemma.

\begin{lemma}\label{lem:support-conjugation}
Let $\calM\subseteq\calL(H)$ be a von Neumann algebra. For all $h\in H$ we have
$P_{\overline{\calM h}} \in \calM'$.
If $u\in\calM'$ is unitary and $h':=uh$, then
\[
u\,P_{\overline{\calM h}}\,u^* = P_{\overline{\calM h'}}.
\]
\end{lemma}

\begin{proof}
As observed in the proof of Theorem \ref{thm:indisc-partialiso}, the subspace $\overline{\calM h}$ is $\calM$-invariant, and so is its orthogonal complement. Therefore $P_{\overline{\calM h}}\in\calM'$.

If $u\in\calM'$ is unitary, then $u(\calM h)=\calM(uh)$ because
$u(ah)=a(uh)$ for all $a\in\calM$. Taking closures, it follows that
$u(\overline{\calM h})=\overline{\calM(uh)}=\overline{\calM h'}$.
Thus $uP_{\overline{\calM h}}u^*$ is the orthogonal projection onto
$u(\overline{\calM h})=\overline{\calM h'}$.
\end{proof}

In combination with Theorem \ref{thm:indisc-partialiso}, this lemma shows that if $P_{\overline{\calM h}}$ and $P_{\overline{\calM h'}}$ are not unitarily
conjugate by a unitary $u\in \calM'$, then there is no unitary $u\in\calM'$ with $uh=h'$.
In particular, since dimensions of kernels are preserved by unitaries, this situation arises when the kernels of $P_{\overline{\calM h}}$ and $P_{\overline{\calM h'}}$ are of different dimensions. We will use this to show that Theorem \ref{thm:indisc-partialiso} is wrong if partial isometries are replaced by unitaries:

\begin{example}\label{ex:no-unitary}
Let $H_1=H_2=\ell^2(\N )$ and let $\calM:=\calL(H_1)\otimes I_{H_2}$
acting on $H:=H_1\otimes H_2$. Then $\calM'=I_{H_1}\otimes \calL(H_2)$.

Let $(e_n)_{n\in\N }$ be the standard basis of $\ell^2(\N )$ and define
unit vectors $h,h'\in H_1\otimes H_2$ by
\[
h:=c\sum_{n\in\N }\frac1{n+1}\, e_n\otimes e_{n+1},
\qquad
h':=c\sum_{n\in\N }\frac1{n+1}\, e_n\otimes e_{n+2},
\]
where $c=1/(\sum_{n\in\N }\frac1{(n+1)^2})^{1/2}$.
For every $A\in\calL(H_1)$ we have, by orthonormality,
\[
\iprod{(A\otimes I)h}{h}
= c^2\sum_{n\in\N }\frac1{(n+1)^2}\iprod{Ae_n}{e_n}
=\iprod{(A\otimes I)h'}{h'},
\]
so $h$ and $h'$ are indiscernible for $\calM$.

Let $S$ be the unilateral shift on $H_2$, i.e., $Se_n=e_{n+1}$ for $n\in\N$.
Set
\[
P_1:=SS^* = I-e_0\,\bar\otimes\, e_0,
\qquad
P_2:=S^2(S^2)^* = I-e_0\,\bar\otimes\, e_0-e_1\,\bar\otimes\, e_1.
\]
Then $v:=SP_1$ is a partial isometry on $H_2$ with $v^*v=P_1$ and $vv^*=P_2$.
Consequently $V:=I\otimes v\in\calM'$ is a partial isometry with
$V^*V=I\otimes P_1$ and $VV^*=I\otimes P_2$, and one checks directly that
\[
(I\otimes v)h=h'.
\]

However, there is no unitary $U\in\calM'$ with $Uh=h'$.
Indeed, any unitary in $\calM'$ has the form $U=I\otimes u$ with $u$ unitary on $H_2$.
If $Uh=h'$, then by Lemma~\ref{lem:support-conjugation},
\[
(I\otimes u)P_{\overline{\calM h}}(I\otimes u)^* = P_{\overline{\calM h'}}.
\]
We claim that
\[
P_{\overline{\calM h}}=I\otimes P_1,
\qquad
P_{\overline{\calM h'}}=I\otimes P_2.
\]
To see the first identity, note that for each $n\in\N $ and $x\in H_1$ the rank--one
operator $A_{x,n}:=x\,\bar\otimes\, e_n : y\mapsto \iprod{y}{e_n}x$ on $H_1$ satisfies
\[
(A_{x,n}\otimes I)h = \frac{c}{n+1}\, x\otimes e_{n+1}.
\]
Hence $H_1\otimes e_{n+1}\subseteq \calM h$ for all $n$, so
$H_1\otimes \Ran(P_1)\subseteq \overline{\calM h}$.

Conversely, note that $h\in H_1\otimes \Ran(P_1)=\Ran(I\otimes P_1)$, hence
\[
(I\otimes(I-P_1))h=0.
\]
For any $A\in\calL(H_1)$ we have $A\otimes I\in\calM$ and
\((A\otimes I)(I\otimes(I-P_1))=(I\otimes(I-P_1))(A\otimes I)\).
Therefore,
\[
(I\otimes(I-P_1))(A\otimes I)h
=(A\otimes I)(I\otimes(I-P_1))h
=0.
\]
Thus $(A\otimes I)h\in\Ker(I\otimes(I-P_1))=\Ran(I\otimes P_1)=H_1\otimes\Ran(P_1)$.
Hence $\calM h\subseteq H_1\otimes\Ran(P_1)$ and therefore also
\(\overline{\calM h}\subseteq H_1\otimes\Ran(P_1).\)
This proves the claim for $P_1$.
The argument for $h'$ is identical, with $P_1$ replaced by $P_2$.

Using the identifications
\(
P_{\overline{\calM h}}=I\otimes P_1
\)
and
\(
P_{\overline{\calM h'}}=I\otimes P_2
\)
just proved, the conjugation identity becomes
\[
(I\otimes u)(I\otimes P_1)(I\otimes u)^* \, =\,  I\otimes P_2.
\]
But
\[
(I\otimes u)(I\otimes P_1)=I\otimes (uP_1)
\quad\text{and}\quad
(I\otimes u)^*=I\otimes u^*,
\]
hence
\[
(I\otimes u)(I\otimes P_1)(I\otimes u)^*
= (I\otimes uP_1)(I\otimes u^*)
= I\otimes (uP_1u^*).
\]
Therefore \(I\otimes (uP_1u^*)=I\otimes P_2\), and thus \(uP_1u^*=P_2\).
But $\dim\ker P_1=1$ and $\dim\ker P_2=2$, and unitary conjugation preserves the
dimension of kernels. This contradiction shows that the role of partial isometries cannot be replaced by unitaries in Theorem \ref{thm:indisc-partialiso}.
\end{example}

We conclude with a result that will be needed in Section \ref{sec:EPR_and_Bell}.

\begin{theorem}[Unitary invariance of the Holevo space and equivalence classes]
\label{thm:Holevo-invariance-under-rotation}
Let $P:\calF\to\Proj(H)$ be a projection-valued measure, and let $U$ be a unitary on $H$.
Define the conjugated PVM $P^U:\calF\to\Proj(H)$ by
\[
P^U(F):=U^*P(F)U,\qquad F\in\calF.
\]
The mapping
\[
\Phi_U\big([h]_{P^U}\big):=[Uh]_{P},
\]
is well defined and bijective, and we have
$$\ket{h}\sim_{\! P^U} \ket{h'} \ \Longleftrightarrow \ Uh\sim_{\!P} Uh'.$$
Furthermore, we have
\[\calM^U=U^*\calM U,\]
where $\calM$ and $\calM^U$ are the von Neumann algebras generated
by the ranges of $P$ and $P^U$, respectively.
\end{theorem}

\begin{proof}
To prove well-definedness, suppose that $[h]_{P^U}=[h']_{P^U}$, i.e.,
\[
\iprod{P^U(F)h}{h}=\iprod{P^U(F)h'}{h'}, \qquad F\in \calF.
\]
Fix $F\in\calF$. Then
\begin{align*}
\iprod{P(F)\,Uh}{Uh}
& =\iprod{U^*P(F)U\,h}{h}
=\iprod{P^U(F)h}{h}
\\ & =\iprod{P^U(F)h'}{h'}
=\iprod{U^*P(F)U\,h'}{h'}
=\iprod{P(F)\,Uh'}{Uh'}.
\end{align*}
Since this holds for all $F\in\calF$, we conclude $Uh \sim_{\!P}Uh'$, so
$[Uh]_{P}=[Uh']_{P}$ and $\Phi_U$ is well defined.
Injectivity and surjectivity follow from the fact that $U$ is invertible.

The identity $\calM^U = U^*\calM U$ is immediate from the definition, as the range of $P^U$ is $\{U^*P(F)U:\ F\in \calF\}$ and conjugation with unitaries preserves strong closures. Hence, the von Neumann algebra it generates is the conjugate by $U$ of the von Neumann algebra generated by $P$.
\end{proof}

\section{The Holevo space}\label{sec:Holevo}

The following definition goes back to Holevo \cite{Hol}. We recall the equivalence relations $\simP$ and $\simM$ of Definition \ref{def:indiscernible}.

\begin{definition}[Holevo space] \

\begin{itemize}
\item If $P = \{P^{(i)}:\, i\in I\}$ is a family of observables,
the quotient space of $\SH$ modulo $\simP$
is called the {\em Holevo space} of $H$ associated with $P$, and is denoted by $$\SHP.$$
\item If $\calM$ is an algebra of observables,
the quotient space of $\SH$ modulo $\simM$
is called the {\em Holevo space} of $H$ associated with $\calM$, and is denoted by $$\SHM.$$
\end{itemize}
Restricting the equivalence relations to the set of pure states $\ESH$, we obtain the {\em pure Holevo spaces} of $H$ associated with $P$ and $\calM$. These are denoted respectively by $\ESHP$ and $\ESHM.$
\end{definition}

If $\calM$ is the von Neumann algebra generated by the range of a PVM $P$, Proposition \ref{prop:PVM-vN-indisc} implies that
\begin{align}\label{eq:PVM-vN-indisc} \SHP = \SHM \ \ \ \hbox{and} \ \ \ \ESHP = \ESHM.
\end{align}

The equivalence classes of a state $\varrho\in \SH$ with respect to these equivalence relations will be denoted by $[\varrho]_P$ and $[\varrho]_\calM$ respectively.
With slight abuse of notation, the equivalence classes of a pure state $\ket{h}$ will be denoted by $[h]_P$ and $[h]_\calM$.

The physics intuition is to think of the elements of $P$ and $\calM$ as the available ``observables''. The elements of the spaces $\SHP$ and $\SHM$ are then the ``states relative to $P$ and $\calM$'', respectively. Informally, they are the ``distinguishable objects'': an observer who has access only to the observables of $P$ or $\calM$ is not able to distinguish between states that are indiscernible for $P$ or $\calM$.

The following proposition identifies $\SHM$ with the set $\SM$ of normal states on $\calM$,
i.e., the set of all normal positive functionals $\om:\calM\to \C$ satisfying $\om(I)=1$. Every such $\om\in\SM$ admits a representation
\[
\om(a)=\tr(a\varrho),\qquad a\in\calM,
\]
for some positive trace class operator $\varrho$ on $H$ with $\tr(\varrho)=1$; see \cite[Theorem 2.4.21]{BraRob} or \cite[Corollary 4.3]{Landsman}.
In general, the representing operator $\varrho$ need not be unique.

\begin{proposition}[Identification of $\SHM$ with normal states on $\calM$]\label{prop:SM-identification}
For $\varrho\in\SH$ define $\omega_\varrho:\calM\to\C$ by
$$\omega_\varrho(a):=\tr(a\varrho),\qquad a\in\calM.$$
Then $\omega_\varrho\in\SM$. Moreover, the mapping $\varrho\mapsto \omega_\varrho$ induces an affine (i.e., convexity preserving) bijection
\[\Phi:\SHM\ \cong \  \SM.\]
\end{proposition}

\begin{proof}
If $\varrho\simM \varrho'$, then $\tr(a\varrho)=\tr(a\varrho')$ for all $a\in\calM$,
so $\omega_\varrho=\omega_{\varrho'}$. Hence the induced mapping $[\varrho]_\calM\mapsto \omega_\varrho$ is well defined. Clearly, this mapping preserves convex combinations.
If $\om_\varrho = \om_{\varrho'}$, then $\tr(a\varrho)=\tr(a\varrho')$ for all $a\in\calM$,
so $\varrho\simM \varrho'$,
and therefore $[\varrho]_{\calM}=[\varrho']_{\calM}$. It follows that the mapping $[\varrho]_\calM\mapsto \omega_\varrho$ is injective.
To prove surjectivity, let $\omega\in\SM$ and choose $\varrho\in \SH$ such that
\(\tr(a\varrho)=\omega(a),\) for all $a\in \calM$. Then $[\varrho]_{\calM}$ maps to $\omega$.
\end{proof}

\subsection{The topology $\tauM$}

In this section, we consider a fixed algebra of observables $\calM$ and study the structure of the Holevo spaces $\SHM$ and $\ESHM$ in more detail. In view of \eqref{eq:PVM-vN-indisc}, this also covers the Holevo spaces associated with a single PVM.

For every $a\in\calM$ we consider the mapping $f_a: \SH \to \C$ defined by $$ f_a(\varrho) := \tr(a\varrho).$$

\begin{definition}[The topologies $\tauM$ and $\tauMt$] The topology $\tauM$ on
$\SH$ is defined as the coarsest topology making $f_a$ continuous for every $a\in\calM$. The induced quotient topology on $\SHM$, i.e., the finest topology for which the quotient mapping is continuous, is denoted by $\tauMt$.
\end{definition}

The quotient mapping from $\SH$ to $\SHM$ is denoted by $q$. It is immediate from the definition of $\tauMt$ that $q$ is continuous and that these equivalence classes are $\tauM$-closed.

\begin{lemma}\label{lem:opennes_of_quotient_TM}
    The quotient topology of $\SHM$ is precisely the collection of sets $U\subseteq \SHM $ with the property that $ q^{-1}(U)$ is $\tauM$-open in $\SH$.
\end{lemma}
\begin{proof}
We claim that the quotient topology of $\SHM$ is precisely the collection of sets $U\subseteq \SHM $ with the property that $ q^{-1}(U)$ is $\tauM$-open in $\SH$.
Indeed, denoting this collection of sets by $\upsilon$ for the moment, we note that $\upsilon$ is a topology that makes $q$ continuous.
Since $\tauMt$ is the finest topology that makes $q$ continuous, we conclude that $\upsilon \subseteq \tauMt$.
Conversely, the map $q:(\SH,\tauM)\to(\SHM,\tauMt)$ is continuous by definition of $\tauMt$.
Therefore, for every $U\in\tauMt$ the preimage $q^{-1}(U)$ is $\tauM$-open in $\SH$. But this means that $U\in\upsilon$.
Hence $\tauMt\subseteq \upsilon$.
   \end{proof}

The same argument shows that the quotient topology of $\ESHM$ is precisely the collection of sets $U\subseteq \ESHM $ with the property that $ r^{-1}(U)$ is open in $\ESH$ with respect to the subspace topology inherited from $\SH$, where $r = q|_{\ESH}: \ESH \to \ESHM$ is the quotient mapping.

\begin{lemma}\label{lem:saturation}
For every $\tauM$-open set $O\subseteq \SH$ we have
\[
q^{-1}(q(O))=O.
\]
\end{lemma}

\begin{proof}
For $a\in\calM$ and an open set $V\subseteq\C$, the lemma holds for the set
$B:=f_a^{-1}(V)$, for if $\varrho\in B$ and
$\varrho\simM\varrho'$, then $f_a(\varrho')=f_a(\varrho)\in V$, hence $\varrho'\in B$.
Thus $q^{-1}(q(B))=B$.

The collection of all open sets $O$ satisfying $q^{-1}(q(O))=O$ is closed under taking finite intersections and arbitrary unions.
Since $\tauM$ is generated by the family $\{f_a^{-1}(V)\}$ under these
operations, every $\tauM$-open set $O$ satisfies $q^{-1}(q(O))=O$.
\end{proof}

As an immediate consequence of this lemma, the quotient mapping
$$q: \SH \to \SHM $$
is open. The lemma also implies that the quotient topology is Hausdorff. The proof is standard; it is included for the reader's convenience.

\begin{lemma}
The space $(\SHM, \tauMt)$ is a Hausdorff space.
\end{lemma}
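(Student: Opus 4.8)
The plan is to exploit the fact that the topology $\tauM$ is the initial topology for the family $(f_a)_{a\in\calM}$ of maps into the Hausdorff space $\C$, and that the equivalence relation $\simM$ is \emph{exactly} the relation of having equal $f_a$-values for every $a$. Consequently each $f_a$ is constant on $\simM$-classes and descends to a well-defined function $\wh{f_a}\colon \SHM \to \C$ characterised by $\wh{f_a}\circ q = f_a$. First I would check that $\wh{f_a}$ is $\tauMt$-continuous: for open $V\subseteq\C$ one has $q^{-1}\bigl(\wh{f_a}^{-1}(V)\bigr) = f_a^{-1}(V)$, which is $\tauM$-open by the definition of $\tauM$, so $\wh{f_a}^{-1}(V)$ is $\tauMt$-open by the definition of the quotient topology. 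Thus the family $(\wh{f_a})_{a\in\calM}$ consists of continuous maps from $\SHM$ into $\C$.

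The separation argument is then immediate. Let $[\varrho]\neq[\varrho']$ be two distinct points of $\SHM$. By the definition of $\simM$, distinctness means that $\varrho$ and $\varrho'$ are \emph{not} indiscernible for $\calM$, so there exists $a\in\calM$ with $f_a(\varrho)\neq f_a(\varrho')$, equivalently $\wh{f_a}([\varrho])\neq\wh{f_a}([\varrho'])$. Since $\C$ is Hausdorff, one may choose disjoint open sets $V,V'\subseteq\C$ with $\wh{f_a}([\varrho])\in V$ and $\wh{f_a}([\varrho'])\in V'$. Then $\wh{f_a}^{-1}(V)$ and $\wh{f_a}^{-1}(V')$ are disjoint $\tauMt$-open sets separating $[\varrho]$ from $[\varrho']$, which is exactly what is required.

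The hard part is conceptual rather than technical: one must notice that quotienting by precisely the indiscernibility relation is what converts the (generally non-separating, hence non-Hausdorff) family $(f_a)$ on $\SH$ into a \emph{separating} family $(\wh{f_a})$ on $\SHM$. Indeed, $\tauM$ itself need not be Hausdorff, since two distinct but indiscernible states cannot be separated by any $f_a$; on the quotient, however, the descended functions separate points by construction. No further input is needed, and the whole argument is the standard observation that a space carrying a separating family of continuous maps into a Hausdorff space is itself Hausdorff.
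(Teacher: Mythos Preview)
Your proof is correct and, at bottom, constructs the same separating open sets as the paper: the sets $\wh{f_a}^{-1}(V)$ you use coincide with the sets $q(f_a^{-1}(V))$ appearing there, because $f_a^{-1}(V)$ is saturated. The difference lies in how openness of these sets is established. The paper first proves a separate lemma that the quotient map $q$ is open (using that basic $\tauM$-open sets are unions of equivalence classes), and then pushes $f_a^{-1}(V)$ forward through $q$. You instead factor $f_a$ through the quotient and invoke the universal property of the quotient topology directly: $q^{-1}(\wh{f_a}^{-1}(V)) = f_a^{-1}(V)$ is $\tauM$-open, hence $\wh{f_a}^{-1}(V)$ is $\tauMt$-open. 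Your route is slightly more economical in that it does not require the openness-of-$q$ lemma as a prerequisite; the paper's route, on the other hand, records that openness as a fact of independent interest.
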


\begin{proof}
Suppose that $ [\varrho_1]_{\calM}$ and $[\varrho_2]_{\calM} $ are distinct equivalence classes in $\SHM$. By the definition of $\simM $, there exists an $a\in \calM$ for which
$f_a(\varrho_1) \neq f_a(\varrho_2).$

Choose disjoint open sets $U,V\subseteq\C$ such that
$f_a(\varrho_1)\in U$ and $f_a(\varrho_2)\in V$, and define
\begin{align*}
W_U & := \bigl\{[\varrho]_{\calM}\in\SHM:\ \exists \varrho'\in f_a^{-1}(U)\ \text{with}\ \varrho\simM\varrho'\bigr\}, \\
W_V & := \bigl\{[\varrho]_{\calM}\in\SHM:\ \exists \varrho'\in f_a^{-1}(V)\ \text{with}\ \varrho\simM\varrho'\bigr\} .
\end{align*}
We claim that $W_U$ and $W_V$ are open in $\SHM$. To prove this, we first observe that
\begin{align}\label{eq:subclaim}
q^{-1}(W_U)=f_a^{-1}(U).
\end{align}
To see this, note that if $\varrho\in f_a^{-1}(U)$ then $[\varrho]_{\calM}\in W_U$, so
$\varrho\in q^{-1}(W_U)$. Conversely, if $\varrho\in q^{-1}(W_U)$ then $[\varrho]_{\calM}\in W_U$, so there exists
$\varrho'\in f_a^{-1}(U)$ with $\varrho\simM\varrho'$. Therefore $f_a(\varrho)=f_a(\varrho')\in U$, i.e., $\varrho\in f_a^{-1}(U)$.
This proves \eqref{eq:subclaim}.

Since $f_a^{-1}(U)$ is $\tauM$-open in $\SH$,
lemma \ref{lem:opennes_of_quotient_TM} implies that
$W_U$ is open in $\SHM$. Likewise, $W_V$ is open.

Clearly, $[\varrho_1]_{\calM}\in W_U$ and $[\varrho_2]_{\calM}\in W_V$. To complete the proof, we show that $W_U\cap W_V=\emptyset$.

Arguing by contradiction, suppose that
$[\varrho]_{\calM} \in W_U \cap W_V$.
Then there exist
$\rho_1 \in f_a^{-1}(U)$ and $\rho_2 \in f_a^{-1}(V)$
such that $\varrho \simM \rho_1$ and $\varrho\simM \rho_2$, i.e.,
$$[\rho_1]_{\calM} = [\varrho]_{\calM} = [\rho_2]_{\calM}.$$
This means that $\rho_1$ and $\rho_2$ belong to the same equivalence class, and therefore
$f_a(\rho_1) = f_a(\rho_2)$.
This common value belongs to
$U$ (because $\rho_1 \in f_a^{-1}(U)$) and to $V$ (because $\rho_2 \in f_a^{-1}(V)$).
Since $U$ and $V$ are disjoint, we have arrived at a contradiction.
\end{proof}

The inclusion $\ESH \subseteq  \SH$ induces a natural inclusion mapping $$\ESHM \subseteq  \SHM$$
which sends the $\simM $-class of a pure state to its equivalence class in $\SHM$.
Explicitly, define $j:\ESHM\to \SHM$ by
\[
j\big([h]_{\calM}\big):=q(h\,\bar\otimes\, h).
\]
This mapping is well defined, for if $[h]_{\calM}=[h']_{\calM}$ in $\ESHM$, then
$\iprod{ah}{h}=\iprod{ah'}{h'}$ for all $a\in\calM$, i.e.,
$\tr(a(h\,\bar\otimes\, h))=\tr(a(h'\,\bar\otimes\,h'))$ for all $a\in\calM$, and therefore
$q(h\,\bar\otimes\, h)=q(h'\,\bar\otimes\, h')$.

The mapping $j$ is also injective,
for if $q(h\,\bar\otimes\, h) = q(h'\,\bar\otimes\, h')$, then $h\,\bar\otimes\, h \simM h'\,\bar\otimes\, h'$ as elements of $\SH$, hence
$\tr(a(h\,\bar\otimes\, h))=\tr(a(h'\,\bar\otimes\, h'))$ for all $a\in \calM$,
or equivalently
$ \iprod{ah}{h} = \iprod{ah'}{h'}$ for all $a\in \calM$, and therefore
$[h]_\calM = [h']_\calM$.

Let $r:= q|_{\ESH}:\ESH\to \ESHM$
be the quotient map, and let $\iota:\ESH \subseteq \SH$ denote the inclusion map. With this notation, we have the commuting diagram
\begin{equation}\label{eq:commute-quotients}
\begin{tikzcd}
\ESH \arrow[r, "r"] \arrow[d, "\iota"'] & \ESHM \arrow[d, "j"] \\
\SH \arrow[r, "q"'] & \SHM
\end{tikzcd}
\end{equation}
The next lemma identifies the quotient topology of $\ESHM$ as the restriction of the quotient topology of $\SHM$. Again the proof is standard, and included for the reader's convenience.

\begin{proposition}\label{prop:ESHM-subspace-topology}
Under the natural inclusion $\ESHM \subseteq  \SHM$,
the quotient topology on $\ESHM $ coincides with the restriction of the quotient topology of $\SHM$.
\end{proposition}

\begin{proof}
We consider $\SHM$ and $\ESHM$ with the quotient topologies induced by
$q:\SH\to\SHM$ and $r:\ESH\to\ESHM$, respectively, and view $\ESHM$ as a subset
of $\SHM$ via the injective map $j$ from \eqref{eq:commute-quotients}.
We show that $U\subseteq \ESHM$ is open in the quotient topology of $\ESHM$
if and only if there exists an open set $V\subseteq \SHM$ such that
$U = V\cap \ESHM$.

\smallskip\noindent
``Only if'': \
Assume that $U$ is open in $\ESHM$. By definition of the quotient topology.
Then $r^{-1}(U)$ is open in $\ESH$ with respect to the subspace topology inherited from $\SH$.
Hence there exists an open set $O\subseteq \SH$ such that
\begin{equation}\label{eq:O-lift}
r^{-1}(U)=\ESH\cap O.
\end{equation}
Define
\[
W:=\{y\in \SHM:\ \exists \rho\in O \text{ with } q(\rho)=y\}.
\]
Then $W=q(O)$. We claim that $W$ is open in $\SHM$.
Indeed, by Lemma \ref{lem:saturation}
\[
q^{-1}(W)=q^{-1}(q(O)) = O.
\]
Therefore $q^{-1}(W)$ is open in $\SH$, and therefore $W$ is open in $\SHM$ by
Lemma \ref{lem:opennes_of_quotient_TM}

Finally, we show that $U = W\cap \ESHM$.
Taking preimages under $r$ and using $j\circ r = q\circ \iota$, we have
\begin{align*}
r^{-1}(W\cap \ESHM)
& = \{\xi\in\ESH:\ j(r(\xi))\in W\}
\\ & = \{\xi\in\ESH:\ q(\iota(\xi))\in W\}
= \iota^{-1}(q^{-1}(W)).
\end{align*}
Intersecting with $\ESH$ and using \eqref{eq:O-lift} gives
\[
r^{-1}(W\cap \ESHM)=\ESH\cap q^{-1}(q(O))=\ESH\cap O=r^{-1}(U).
\]
Since $r$ is surjective, this implies $W\cap \ESHM = U$.

\smallskip\noindent
``If'': \
Conversely, suppose that $U=V\cap \ESHM$ for some open set $V\subseteq \SHM$.
Then
\begin{align*}
r^{-1}(U)
=r^{-1}(V\cap \ESHM)
& =\{\xi\in\ESH:\ j(r(\xi))\in V\}
\\ & =\{\xi\in\ESH:\ q(\iota(\xi))\in V\}
=\iota^{-1}(q^{-1}(V)).
\end{align*}
Since $q$ is continuous by definition of $\tauMt$, the set $q^{-1}(V)$ is open in $\SH$,
and therefore $\iota^{-1}(q^{-1}(V))$ is open in $\ESH$.
Thus $r^{-1}(U)$ is open in $\ESH$, and by
Lemma \ref{lem:opennes_of_quotient_TM}
$U$ is open in $\ESHM$.
\end{proof}

In what follows, there is no risk of confusion by denoting the quotient topology of $\ESHM$ by $\tauMt$ again, and $\ESHM$ will be assumed to be endowed with this topology.
As a consequence of the preceding results, $\ESHM$ is a Hausdorff space.

In Section \ref{subsec:PVM} we will prove (see Theorem \ref{thm:convex}) that if $\calM$ is an abelian von Neumann algebra, then $$\ESHM = \SHM.$$

\section{Projection-valued measures}\label{sec:examples}

The main result of this section is an intrinsic representation of the Holevo space of the von Neumann algebra generated by a projection-valued measure. As an immediate application, we determine the Holevo space of a finite family of commuting normal operators.

\subsection{Projection-valued measures}\label{subsec:PVM}

Throughout this paragraph, we let
$P:\calF\to\ProjH$ be a projection–valued measure on a given measurable space $(\Omega,\calF)$.

\begin{proposition}[Representation of $\calM$ as $L^\infty(\Omega,\mu)$]
Suppose there exists a $\sigma$-finite measure $\mu$ on $(\Om,\calF)$ with the property that for all $F\in \calF$ we have
$$ \mu(F) = 0 \iff P_F = 0.$$
Then the von Neumann algebra $\calM$ generated by the range of $P$ is $*$–isometric to $L^\infty(\Omega,\mu)$. More precisely, the mapping
$$
\theta: L^\infty(\Omega,\mu) \to \calM,\quad \theta(f) = \int_\Omega f\ud P,
$$
is a $*$–isometry.
\end{proposition}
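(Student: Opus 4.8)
The plan is to establish that the integration map $\theta(f) = \int_\Omega f \ud P$ is a well-defined $*$-homomorphism that is isometric, and then that its image is exactly $\mathcal{M}$. The key analytic input is the spectral-integral calculus: for a PVM $P$, the map $f \mapsto \int_\Omega f \ud P$ is a $*$-homomorphism from bounded measurable functions into $\mathcal{L}(H)$, satisfying $\int fg \ud P = (\int f \ud P)(\int g\ud P)$ and $(\int f\ud P)^* = \int \bar f \ud P$, with $\|\int f\ud P\| = \|f\|_\infty$ in the essential-sup sense relative to $P$. These are standard facts (as in \cite[Chapter 15]{Nee}) that I would cite rather than reprove.

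The first step is to verify that $\theta$ is \emph{well-defined on $L^\infty(\Omega,\mu)$}. This is exactly where the hypothesis $\mu(F)=0 \iff P_F=0$ enters. Since two bounded measurable functions represent the same element of $L^\infty(\Omega,\mu)$ precisely when they agree $\mu$-almost everywhere, and the equivalence $\mu(F)=0 \iff P_F = 0$ means that $\mu$-null sets coincide with $P$-null sets, functions agreeing $\mu$-a.e. also agree $P$-a.e., hence have the same spectral integral. So $\theta$ descends to a map on equivalence classes. The same equivalence shows the essential supremum of $f$ with respect to $\mu$ equals the essential supremum with respect to $P$, which I would use for the isometry claim.

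Next I would record that $\theta$ is a $*$-homomorphism (linear, multiplicative, and $*$-preserving), which is immediate from the cited spectral-integral identities, and that it is \emph{isometric}: $\|\theta(f)\|_{\mathcal{L}(H)} = \operatorname{ess\,sup}_P |f| = \|f\|_{L^\infty(\Omega,\mu)}$, where the last equality is precisely the matching of $\mu$-null and $P$-null sets. Isometry in particular gives injectivity.

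The remaining and most delicate point is \textbf{surjectivity onto $\mathcal{M}$}, i.e. that the range of $\theta$ is all of the von Neumann algebra generated by $\{P_F : F \in \calF\}$. One inclusion is easy: each generator $P_F = \theta(\mathbf{1}_F)$ lies in the range, so $\mathcal{M} \supseteq \operatorname{ran}(\theta)$ reduces to showing $\operatorname{ran}(\theta)$ is itself a von Neumann algebra containing these generators. For the reverse inclusion $\operatorname{ran}(\theta) \subseteq \mathcal{M}$: simple functions integrate to finite linear combinations of the $P_F$, hence lie in $\mathcal{M}$, and a general $f \in L^\infty$ is a uniform limit of simple functions, so $\theta(f)$ is an operator-norm limit of elements of $\mathcal{M}$ and thus lies in the (norm-closed, hence weak-operator-closed) algebra $\mathcal{M}$. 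The genuinely substantive direction is $\mathcal{M} \subseteq \operatorname{ran}(\theta)$, which requires that $\operatorname{ran}(\theta)$ already be weak-operator closed (so that it contains the full von Neumann algebra generated by the projections, not merely their norm-closed algebra). \textbf{The hard part is this closedness}: since $\theta$ is an isometric $*$-homomorphism of the von Neumann algebra $L^\infty(\Omega,\mu)$, its range is a $C^*$-subalgebra, but one must argue it is weak-operator (equivalently ultraweakly) closed. I expect the cleanest route is to invoke that $\theta$ is not merely isometric but \emph{normal} (ultraweakly continuous): the defining property (ii) of a PVM says $F \mapsto \iprod{P_F h}{h}$ is a measure, which makes $\theta$ continuous from the $\sigma(L^\infty, L^1)$ topology to the ultraweak topology, and a normal $*$-isomorphism between von Neumann algebras has ultraweakly closed range. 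This identifies $\operatorname{ran}(\theta)$ with the von Neumann algebra $\mathcal{M}$ generated by the $P_F$ and completes the proof.
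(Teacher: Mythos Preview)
Your proposal is correct and follows essentially the same route as the paper: both rely on the bounded Borel calculus $f\mapsto\int_\Omega f\,\mathrm{d}P$ for the PVM and then use the hypothesis $\mu(F)=0\iff P_F=0$ to identify $L^\infty(\Omega,\mu)$ with $L^\infty(\Omega,P)$. The paper simply cites \cite[Theorem 9.8, Proposition 9.11]{Nee} for the fact that $\theta_P:L^\infty(\Omega,P)\to\calM$ is a $*$-isometry (with surjectivity following because each generator $P_F=\theta_P(\mathbf 1_F)$ lies in the range), whereas you unpack the surjectivity argument via normality of $\theta$; this is a legitimate and slightly more self-contained way to reach the same conclusion.
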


\begin{proof}
Let $L^\infty(\Omega,P)$ be the Banach space obtained by identifying bounded measurable functions on $\Omega$ when they differ on a set $N\in \calF$ such that $P_N = 0$.
By \cite[Theorem 9.8, Proposition 9.11]{Nee},
the bounded functional calculus for $P$ induces a surjective $*$-isometry
$$
\theta_P: L^\infty(\Omega,P) \to \calM,\quad \theta_P(f) = \int_\Omega f\ud P,
$$
noting that surjectivity follows from the fact that $\theta_P(\one_F) = P_F$ for all $F\in \calF$.
Since $P$ and $\mu$ have the same null sets, the identity mapping $I: f\mapsto f$ is a $*$-isometry from $L^\infty(\Omega,\mu)$ onto $L^\infty(\Omega,P)$. It follows that the mapping $\theta = \theta_P \circ I$ has the stated properties.
\end{proof}

The following proposition allows us to make a special choice of the measure $\mu$.

\begin{proposition}[Cyclic vectors]\label{prop:PVM-cyclic}
There exists a unit vector $h_0\in H$ such that the scalar measure defined by
$$ P_{h_0}(F):=\iprod{P_F h_0}{h_0} , \quad F\in \calF,$$
has the same null sets as $P$.
\end{proposition}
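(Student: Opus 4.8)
The plan is to reduce the statement to the existence of a \emph{separating vector} for $\calM$ and then construct such a vector explicitly. First I note that, since each $P_F$ is an orthogonal projection, $\iprod{P_F h_0}{h_0} = \|P_F h_0\|^2$, so the measures $P_{h_0}$ and $P$ have the same null sets precisely when
$$ P_F h_0 = 0 \iff P_F = 0 \qquad (F\in\calF). $$
The implication $\Leftarrow$ is trivial, so the entire content is the forward implication: $h_0$ is annihilated by no nonzero $P_F$. Since the projections $P_F$ lie in (and generate) $\calM$, it suffices to produce a unit vector $h_0$ that is separating for $\calM$, meaning that $a h_0 = 0$ forces $a = 0$ for every $a\in\calM$; applying this with $a = P_F$ then gives the claim.

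To build $h_0$ I would decompose $H$ into countably many mutually orthogonal cyclic subspaces. Recall that $\calM$ is a $*$-algebra, so that if $K\subseteq H$ is a closed $\calM$-invariant subspace then $K^\perp$ is $\calM$-invariant as well: for $b\in\calM$, $g\in K^\perp$, $k\in K$ one has $\iprod{bg}{k} = \iprod{g}{b\s k} = 0$ since $b\s k\in K$. This is exactly the observation already used in the proof of Theorem~\ref{thm:indisc-unitary}. A maximality argument (Zorn's lemma) then furnishes a maximal family of unit vectors $(g_n)$ whose cyclic subspaces $H_n := \ov{\calM g_n}$ are pairwise orthogonal; by separability of $H$ this family is at most countable, and the $g_n$ may be taken orthonormal. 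If $\bigoplus_n H_n$ were a proper subspace, its orthogonal complement would be a nonzero $\calM$-invariant subspace from which one could extract a further cyclic subspace orthogonal to all the $H_n$, contradicting maximality. Hence $H = \bigoplus_n H_n$.

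Now I would set $h_0 := \sum_n c_n g_n$ with strictly positive weights $c_n$ satisfying $\sum_n c_n^2 = 1$, so that $h_0$ is a unit vector. Suppose $a\in\calM$ and $a h_0 = 0$. Because each $H_n$ is $\calM$-invariant we have $a g_n\in H_n$, and since the $H_n$ are mutually orthogonal the vectors $c_n\, a g_n$ are orthogonal summands of $a h_0 = \sum_n c_n\, a g_n = 0$; by Pythagoras, and as $c_n\neq 0$, this forces $a g_n = 0$ for every $n$. Here commutativity enters: for any $b\in\calM$, $a(b g_n) = b(a g_n) = 0$, so $a$ vanishes on $\calM g_n$ and hence on $H_n = \ov{\calM g_n}$; as $H$ is the direct sum of the $H_n$, we conclude $a = 0$. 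Thus $h_0$ is separating, and taking $a = P_F$ completes the argument, while $P_{h_0}(\Om) = \iprod{h_0}{h_0} = 1$ shows that $P_{h_0}$ is a genuine (probability) measure.

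The main obstacle is the orthogonal cyclic decomposition: one must check that it exhausts $H$ (which uses separability together with the invariance of orthogonal complements under the $*$-algebra $\calM$) and that the separating property propagates from the generators $g_n$ to all of $H$. This final step is precisely where the abelianness of $\calM$—guaranteed here because $\calM$ is generated by the commuting range of a single PVM—is indispensable, as it is what turns $a g_n = 0$ into $a\,\calM g_n = \{0\}$.
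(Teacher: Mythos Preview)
Your proof is correct and follows essentially the same route as the paper: a Zorn's-lemma cyclic decomposition $H=\bigoplus_n H_n$ followed by the weighted sum $h_0=\sum_n c_n g_n$, with commutativity used to pass from $P_F g_n=0$ to $P_F|_{H_n}=0$. The only cosmetic difference is that you package the final step as ``$h_0$ is separating for $\calM$'' and spell out the role of commutativity, whereas the paper works directly with the scalar measures $\mu_n$ and leaves the implication $\mu_n(F)=0\Rightarrow P_F|_{H_n}=0$ implicit.
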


\begin{proof}
A routine application of Zorn's lemma gives us a (possibly uncountable) orthogonal decomposition $ H=\bigoplus_{i\in I} H_i $
into closed nonzero subspaces reducing $P$ (i.e., $P_F(H_i)\subseteq H_i$ for all $F\in \calF$ and $i\in I$)
with the following property: For every $i\in I$ there is a norm--one vector $h_i\in H_i$ which is {\em cyclic} for $P$ restricted to $H_i$,
in the sense the linear span of $\{P_Fh_i:\, F\in\calF\}$ is densely contained in $H_i$.
Since $H$ is separable (recall that this is a standing assumption throughout the paper), any such decomposition is countable. Thus we may fix an orthogonal decomposition $H = \bigoplus_{n\ge 1} H_n $
such that for every $n\ge 1$ there is a norm--one vector $h_n\in H_n$ such that the linear span of $\{P_Fh_n:\, F\in\calF\}$ is densely contained in $H_n$.

We claim that the scalar measure $\mu_n(F) := \iprod{P_Fh_n}{h_n}$ has the property that for every $F\in\calF$,
\[
\mu_n(F)=0 \quad\Longleftrightarrow\quad P_F|_{H_n}=0.
\]
The implication $(\Leftarrow)$ is immediate. For $(\Rightarrow)$, suppose $\mu_n(F)=0$.
Since $P_F$ is an orthogonal projection, $P_F\ge0$ and hence
\[
0=\iprod{P_F h_n}{h_n}=\|P_F h_n\|^2,
\]
so $P_F h_n=0$. For any $G\in\calF$ we then have
\[
P_F(P_G h_n)=P_F P_G h_n = P_{F\cap G}h_n=0,
\]
using the multiplicativity of the PVM on intersections. Thus $P_F$ annihilates the linear span of
$\{P_G h_n:\ G\in\calF\}$, which is dense in $H_n$ by cyclicity of $h_n$ for $P$ on $H_n$.
By continuity, $P_F|_{H_n}=0$.

The unit vector
$$ h_0 := \sum_{n\ge1} 2^{-n/2} h_n.$$
is such that for all $F\in\calF$ we have
$$ P_{h_0}(F) := \iprod{P_Fh_0}{h_0} = \sum_{n\ge1} 2^{-n} \iprod{P_Fh_n}{h_n}, $$
since the cyclic subspaces are mutually orthogonal. Clearly, we have $\iprod{P_Fh_0}{h_0}=0$ if and only if $\iprod{P_Fh_n}{h_n} = 0$ for all $n\ge 1$.
Thus, the scalar measure $P_{h_0}$ satisfies
$$ P_{h_0}(F)=0 \iff\forall n \ \mu_n(F)=0  \iff  \forall n \ P_F|_{H_n}=0  \iff P_F=0. $$
\end{proof}

We now fix a unit vector $h_0\in H$ with the properties stated in the proposition.

\begin{proposition}\label{prop:isom-h0}
Under the above assumptions, the mapping
$$
U:\, \sum_{j=1}^n c_j 1_{F_j}\mapsto \sum_{j=1}^n c_j P_{F_j}h_0
$$
extends uniquely to an isometry from $L^2(\Omega,P_{h_0})$ into $H$.
\end{proposition}

\begin{proof}
First let $f$ be a simple function, say $f=\sum_{j=1}^n c_j 1_{F_j}$,
where $F_j\in\calF$ are pairwise disjoint and $c_j\in\mathbb{C}$. Define
\begin{align}\label{eq:defU}
U(f) := \sum_{j=1}^n c_j P_{F_j}h_0.
\end{align}
Unpacking definitions and using the properties of projection-valued measures,
\begin{align*}
\iprod{U(\one_F)}{U(\one_G)}
& = \iprod{P_F h_0}{P_G h_0} = \iprod{P_{F\cap G}h_0}{h_0}
\\ & = \int_\Om \one_{F\cap G}\ud P_{h_0} = \iprod{\one_F}{\one_G}_{L^2(\Omega,P_{h_0})}.
\end{align*}
By sesquilinearity, this extends to simple functions $f$ and $g$. Hence, $U$ preserves the norm on the simple functions.
Since the simple functions are dense in $L^2(\Omega,P_{h_0})$, $U$  extends uniquely by continuity to an isometry $U$ from $L^2(\Omega,P_{h_0})$ into $H$.
\end{proof}

We write $\mu \, \ll\, P$ if the measure $\mu$ is absolutely continuous with respect to $P$ in the sense that we have $\mu(F)=0$ for every set $F\in\calF$ with $P_F=0$. With this notation, the main theorem of this section reads as follows.

\begin{theorem}[Identification of the Holevo Space]\label{thm:Holevo-for-P}
Let $(\Om,\calF)$ be a measurable space and $P: \calF \to \ProjH$ a projection-valued measure. Let $\calM$ be the von Neumann algebra generated by its range.
Let $h_0\in H$ be any unit vector such that the probability measure $P_{h_0}$ has the same null sets as $P$.
Then $$\ESHM = \SHM,$$ and the identifications
    $$
    [h]_{\calM} \leftrightarrow  \rho P_{h_0}  \leftrightarrow  \rho
    $$
define affine isomorphisms
    \begin{align*}
     \ESHM \ \,& \cong\,
      \Bigl\{\mu \in M_1^+(\Om) :\, \mu \,{\ll}\, P\Bigr\}
 \\ &\cong\,  \Bigl\{\rho\in L^1(\Omega,P_{h_0}) :\, \rho\ge 0,\, \|\rho\|_1=1\,\Bigr\},
    \end{align*}
where $M_1^+(\Om)$ is the convex set of probability measures on $(\Om,\calF)$.
\end{theorem}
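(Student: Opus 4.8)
The plan is to parametrise each equivalence class $[h]$ by the scalar spectral measure $P_h(F):=\iprod{P_F h}{h}$, and then to recognise $P_h$ as absolutely continuous with respect to $P_{h_0}$, its Radon--Nikodym density being the sought element of $L^1(\Omega,P_{h_0})$. First I would note that, by Proposition \ref{prop:indisc-specmeas} applied to the single projection-valued measure $P$, two pure states satisfy $\ket{h}\simM\ket{h'}$ exactly when $P_h=P_{h'}$; hence $[h]\mapsto P_h$ is a well-defined injection of $\ESHM$ into the probability measures on $(\Omega,\calF)$. Since $P_{h_0}$ has the same null sets as $P$, any $F$ with $P_{h_0}(F)=0$ satisfies $P_F=0$ and therefore $P_h(F)=0$; thus $P_h\ll P_{h_0}$, and $P_h=\rho_h\,P_{h_0}$ for a unique density $\rho_h\in L^1(\Omega,P_{h_0})$ with $\rho_h\ge 0$ and $\|\rho_h\|_1=1$. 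This produces the two candidate maps $[h]\mapsto P_h=\rho_h\,P_{h_0}$ and $[h]\mapsto\rho_h$, which are injective by construction.

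The heart of the argument is surjectivity, for which I would exploit the isometry $U\colon L^2(\Omega,P_{h_0})\to H$ of Proposition \ref{prop:L2H}. From its definition on simple functions one reads off the intertwining relation $U(\one_G f)=P_G\,U(f)$, valid for all $f\in L^2(\Omega,P_{h_0})$ and $G\in\calF$ by density. Consequently, for a unit vector $g\in L^2(\Omega,P_{h_0})$ and $h:=U(g)$ (so $\|h\|=1$ since $U$ is isometric),
\[
 P_h(F)=\iprod{P_F U(g)}{U(g)}=\iprod{U(\one_F g)}{U(g)}=\iprod{\one_F g}{g}_{L^2(\Omega,P_{h_0})}=\int_F |g|^2\ud P_{h_0},
\]
so that $\rho_h=|g|^2$. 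Given any admissible density $\rho$, taking $g:=\rho^{1/2}\in L^2(\Omega,P_{h_0})$ yields a unit vector $h=U(g)$ with $\rho_h=\rho$, establishing surjectivity onto $\{\rho\in L^1(\Omega,P_{h_0}):\rho\ge 0,\ \|\rho\|_1=1\}$. The same computation shows that an arbitrary class $[h]$, with $h$ not necessarily in the range of $U$, coincides with $[U(\rho_h^{1/2})]$, since both carry the identical spectral measure $P_h$; this is precisely what lets one represent every class by a vector manufactured from $h_0$.

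Finally I would pin down the affine structure. Combining Proposition \ref{prop:SM} ($\SHM\cong\SM$) with the representation $\calM\cong L^\infty(\Omega,P_{h_0})$, the normal state attached to $[h]$ is $f\mapsto\int_\Omega f\,\rho_h\ud P_{h_0}$; under the $L^\infty$--$L^1$ duality this functional is exactly the element $\rho_h$, so $[h]\leftrightarrow\rho_h$ is the restriction to pure-state classes of an affine isomorphism of $\SHM$ onto the set of densities. Because that set is convex and the map is an affine bijection onto it, $\ESHM$ is convex (indeed it fills out all of $\SHM$), and the displayed chain $[h]\leftrightarrow\rho\,P_{h_0}\leftrightarrow\rho$ consists of affine isomorphisms; the measures $\rho\,P_{h_0}$ occurring this way are exactly the probability measures absolutely continuous with respect to $P$.

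I expect the surjectivity step to be the main obstacle. The delicate point is that the vector $h_0$ furnished by Proposition \ref{prop:PVM-cyclic} need not be cyclic for $\calM$ on all of $H$, so the range of $U$ may be a proper subspace and not every unit vector is of the form $U(g)$. One therefore cannot realise each \emph{vector} inside the range of $U$, only each \emph{class}; the intertwining identity $U(\one_F\,\cdot)=P_F\,U(\cdot)$, together with the observation that a class is determined solely by its spectral measure, is the device that overcomes this difficulty.
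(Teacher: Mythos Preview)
Your proposal is correct and follows essentially the same route as the paper: parametrise classes by the scalar measures $P_h$, invoke Radon--Nikod\'ym for injectivity into densities, and use the isometry $U$ of Proposition~\ref{prop:L2H} together with $g=\rho^{1/2}$ for surjectivity. The only notable difference is in the final step: the paper verifies convexity and affineness by a direct computation with convex combinations of densities, whereas you route the argument through Proposition~\ref{prop:SM} and the $L^\infty$--$L^1$ duality, which has the pleasant side effect of showing $\ESHM=\SHM$ outright; both arguments are valid and short.
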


By Proposition \ref{prop:PVM-cyclic}, vectors $h_0$ with the stated properties always exist.
Note that the first displayed representation is intrinsic, in the sense that it does not rely on an {\em ad hoc} choice of a vector $h_0$.

\begin{proof}  Since $P_{h_0}$ and $P$ have the same null set and $P_N = 0$ implies $P_h(N) = 0$ for every unit vector $h\in H$, the probability measures $P_h$ are absolutely continuous with respect to $P_{h_0}$.
By the Radon--Nikod\'ym theorem, the first representation therefore follows from the second, with $\rho$ the Radon--Nikod\'ym derivative of $\mu$ with respect to $P_{h_0}$; in the identification $[h]_{\calM} \leftrightarrow  \rho P_{h_0}$ we take $\mu = P_h$. Accordingly, we concentrate on the second representation and will prove that $[h]_{\calM}\mapsto \rho$ sets up an affine isomorphism from $\ESHM $ onto the set
$\{\rho\in L^1(\Omega,P_{h_0}) :\, \rho\ge 0,\, \|\rho\|_1=1\}$.

\smallskip
{\em Step 1} -- By Proposition \ref{prop:PVM-vN-indisc} we have equivalence of pure states $\ket{h} \simM \ket{h'}$ if and only if we have equality of measures $P_h = P_{h'}$. Therefore, $\simM$-equivalent measures share a common Radon--Nikod\'ym density $\rho\in L^1(\Om, P_{h_0})$. This density is nonnegative and satisfies
$$\int_\Omega \rho \ud P_{h_0} = \int_\Om \one \ud P_h = P_h (\Om)  =1.$$

\smallskip
{\em Step 2} -- To see that every $\rho$ in the positive part of the unit sphere of $L^1(\Omega,P_{h_0})$ is obtained in this way, let such a $\rho$ be given. Then $\psi := \rho^{1/2}$ has norm one in $L^2(\Omega,P_{h_0})$. Using the isometry $U$ of Proposition \ref{prop:isom-h0}, let $h := U\psi \in H$. Then
$$\n h\n^2 = \n U\psi\n^2 = \n \psi\n_2^2 = \n \rho\n_1 = 1,$$ and therefore it defines a pure state $\ket{h}$.
To complete the proof we must show that $P_h$ has Radon--Nikod\'ym derivative $\rho = \psi^2$ with respect to $P_{h_0}$, and for this it suffices to check that
\begin{align*} \int_\Om f \ud P_h = \int_\Om f \psi^2 \ud P_{h_0}
\end{align*}
for all $f = \one_F$ with $F\in \calF$.

We claim that for every $F\in\calF$ and every $f\in L^2(\Omega,P_{h_0})$ one has the
intertwining identity
\begin{equation}\label{eq:intertwine-PF-U}
P_F\,U(f)=U(\one_F f).
\end{equation}
Indeed, if $f=\sum_{j=1}^n c_j\one_{F_j}$ is simple, then by \eqref{eq:defU},
\[
P_F U(f)=\sum_{j=1}^n c_j\,P_F P_{F_j}h_0
=\sum_{j=1}^n c_j\,P_{F\cap F_j}h_0
=U\Big(\sum_{j=1}^n c_j\,\one_{F\cap F_j}\Big)
=U(\one_F f).
\]
Now let $f\in L^2(\Omega,P_{h_0})$ and choose simple functions $f_m\to f$ in $L^2(\Om, P_{h_0})$.
Since $P_FU(f_m)=U(\one_F f_m)$ for each $m\ge 1$, we then have
\begin{align*}
\|P_FU(f)-U(\one_F f)\|
& \le \|P_FU(f-f_m)\|+\|U(\one_F(f_m-f))\|
\\ & \le \|f-f_m\|_2+\|\one_F(f_m-f)\|_2\to 0.
\end{align*}
This proves that \eqref{eq:intertwine-PF-U} holds.

Now,
\begin{align*} \int_\Om \one_F \ud P_{h}
& = \iprod{P_F h}{h} = \iprod{P_F U\psi}{U\psi}
\\ & \stackrel{(*)}{=} \iprod{U(\one_F \psi)}{U\psi} \stackrel{(**)}{=}\iprod{\one_F \psi}{\psi} = \int_\Om \one_F \psi^2 \ud P_{h_0},
\end{align*}
where $(*)$ is a consequence of \eqref{eq:intertwine-PF-U} and $(**)$ uses that $U$ is an isometry.

\smallskip{\em Step 3} --
Up to this point, we have shown that the identifications set up bijections. The proof that $\ESHM$ is convex and that the identifications are affine is now routine: if $[\xi_0]_{\calM}$ and $[\xi_1]_{\calM}$ are given elements of $\ESHM$, let $\rho_0$ and $\rho_1$ be the corresponding densities. If $0<\la<1$, then $(1-\la)\rho_0+\la \rho_1$ is also such a density, and hence corresponds to an element $[\xi_\la]_{\calM}$ of $\ESHM$, where $h_\la\in H$ has norm one.
Then, viewing $\xi_0, \xi_1$, and $\xi_\la$ as elements of $\SH$, passing to their equivalence classes results in the identity $[\xi_\la] = (1-\la)[\xi_0]+\la [\xi_1]$ as elements of $\SHM$. This shows that $\ESHM$ is convex as a subset of $\SHM$. Repeating this argument with general convex combinations, it also follows that the bijection is affine.

\smallskip{\em Step 4} -- It remains to prove the equality $\ESHM=\SHM$.
Explicitly, the claim is that the natural inclusion $\ESHM \subseteq \SHM$ induced by passing to quotients in the inclusion $\ESH \subseteq \SH$ is in fact a bijection.

Let $\sigma\in\SH$ be arbitrary and define a probability measure
\[
\mu_\sigma(F):=\tr(P_F\sigma),\qquad F\in\calF.
\]
If $P_F=0$ then $\mu_\sigma(F)=0$, hence $\mu_\sigma\ll P$. Since $P_{h_0}$ has the same
null sets as $P$, also $\mu_\sigma\ll P_{h_0}$. By the Radon--Nikod\'ym theorem there exists
a density $\rho_\sigma\in L^1(\Omega,P_{h_0})$ with $\rho_\sigma\ge0$, $\|\rho_\sigma\|_1=1$ and
\[
\mu_\sigma=\rho_\sigma P_{h_0}.
\]
Let $\psi:=\rho_\sigma^{1/2}\in L^2(\Omega,P_{h_0})$ and set $h:=U\psi\in H$ (so $\|h\|=1$).
By Step~2 we then have $P_h=\rho_\sigma\,P_{h_0}=\mu_\sigma$.

\smallskip
We claim that $\sigma\simM  (h\,\bar\otimes\, h)$. Since $\calM$ is generated by the range of $P$,
it suffices to check equality of expectations on the norm-dense $*$-subalgebra
$\{\theta(s)=\int s\ud P:\ s \text{ simple}\}\subseteq\calM$.
If $s=\sum_{j=1}^n c_j\one_{F_j}$, then
\[
\tr(\theta(s)\sigma)=\sum_{j=1}^n c_j\,\tr(P_{F_j}\sigma)
=\sum_{j=1}^n c_j\,\mu_\sigma(F_j)
=\int_\Omega s\ud  \mu_\sigma,
\]
and similarly,
\[
\iprod{\theta(s)h}{h}
=\sum_{j=1}^n c_j\,\iprod{P_{F_j}h}{h}
=\sum_{j=1}^n c_j\,P_h(F_j)
=\int_\Omega s\ud  P_h
=\int_\Omega s\ud  \mu_\sigma.
\]
Hence $\tr(\theta(s)\sigma)=\iprod{\theta(s)h}{h}$ for all simple $s$.
Now let $f\in L^\infty(\Omega,P_{h_0})$ and choose simple $s_m$ with
$\|f-s_m\|_\infty\to0$. Since $\theta$ is a $*$-isometry,
$\|\theta(f)-\theta(s_m)\|\le \|f-s_m\|_\infty\to0$, and therefore
\[
\tr(\theta(f)\sigma)=\lim_{m\to\infty}\tr(\theta(s_m)\sigma)
=\lim_{m\to\infty}\iprod{\theta(s_m)h}{h}
=\iprod{\theta(f)h}{h}.
\]
Thus $\tr(a\sigma)=\iprod{ah}{h}$ for all $a\in\calM$, i.e.\ $\sigma\simM  h\,\bar\otimes\,h$.

\smallskip
Consequently, every class in $\SHM$ has a pure representative; equivalently $q(\ESH)=\SHM$.
Since the canonical map $\ESH/\!\simM \to \SH/\!\simM $ induced by the inclusion
$\ESH\hookrightarrow\SH$ is always injective, it is therefore bijective, and hence $\ESHM=\SHM$ as convex sets.
\end{proof}

In the case of a finite-dimensional Hilbert space, we have the following result.

\begin{corollary}\label{cor:holevo_finite_dim}
If $H$ is $d$-dimensional and $P : \calF \to \Proj(H)$ is a projection-valued measure,
then $\calF$ is generated by at most $d$ atoms for the measure $P_{h_0}$, the Holevo space takes the form of a simplex over these atoms, and its extreme points are the normalised indicator functions of these atoms.
\end{corollary}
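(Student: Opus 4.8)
The plan is to specialise Theorem \ref{thm:Holevo-for-P} to the finite-dimensional case and read off the structure of the resulting $L^1$-space. First I would invoke Proposition \ref{prop:PVM-cyclic} to fix a unit vector $h_0\in H$ such that the probability measure $P_{h_0}(F)=\iprod{P_Fh_0}{h_0}$ has the same null sets as $P$, and then apply Theorem \ref{thm:Holevo-for-P} to obtain the affine isomorphism
\[
\ESHM \cong \bigl\{\rho\in L^1(\Om,P_{h_0}):\, \rho\ge 0,\ \|\rho\|_1=1\bigr\}.
\]
Thus everything reduces to understanding the measure algebra of $P_{h_0}$ when $\dim H = d$.

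The key structural fact is that the measure space $(\Om,\calF,P_{h_0})$ is purely atomic with at most $d$ atoms. To see this, I would use the $*$-isometry $U:L^2(\Om,P_{h_0})\to H$ of Proposition \ref{prop:L2H}. Since $U$ is an isometry into a $d$-dimensional space, $L^2(\Om,P_{h_0})$ has dimension at most $d$; but an $L^2$-space over a measure with infinitely many (or a non-atomic piece of positive) measure is infinite-dimensional. Hence $P_{h_0}$ must be supported on finitely many atoms $A_1,\dots,A_k$ with $k\le d$, where by ``atom'' I mean a set $A_j\in\calF$ of positive $P_{h_0}$-measure that cannot be split into two subsets of positive measure (modulo $P_{h_0}$-null sets). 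On such a space $L^1(\Om,P_{h_0})\cong \ell^1_k$ via $\rho\mapsto (\int_{A_j}\rho\ud P_{h_0})_{j=1}^k$, and the normalisation constraint $\int_\Om\rho\ud P_{h_0}=1$ together with $\rho\ge 0$ picks out precisely the standard simplex
\[
\Delta_{k-1}=\Bigl\{(t_1,\dots,t_k):\, t_j\ge 0,\ \textstyle\sum_j t_j =1\Bigr\}.
\]
Its extreme points are the vertices $e_j$, which correspond under the isomorphism to the normalised indicators $P_{h_0}(A_j)^{-1}\one_{A_j}$, i.e.\ the densities concentrated on a single atom.

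To conclude, I would identify these extreme densities back in $\ESHM$: the vertex supported on $A_j$ corresponds to the pure state $[h]$ with $P_h = P_{h_0}(A_j)^{-1}\,\one_{A_j}\,P_{h_0}$, that is, the spectral projection $P_{A_j}$ of rank one determines a one-dimensional eigenspace, and the $k$ atoms partition a diagonalising orthonormal basis. The affine isomorphism of Theorem \ref{thm:Holevo-for-P} carries extreme points to extreme points, so the extreme points of $\ESHM$ are exactly the classes corresponding to the indicators of the atoms, and the whole space is the simplex over them.

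The main obstacle is the dimension-counting step that forces $P_{h_0}$ to be purely atomic with at most $d$ atoms: one must argue carefully that $\dim L^2(\Om,P_{h_0})\le d$ genuinely rules out both a non-atomic component and more than $d$ atoms. I expect the cleanest route is to note that the atoms $A_1,\dots,A_k$ give $P_{h_0}$-orthogonal indicators $\one_{A_1},\dots,\one_{A_k}$ in $L^2(\Om,P_{h_0})$, which $U$ maps to mutually orthogonal vectors in $H$; since $H$ has dimension $d$ this forces $k\le d$, and a non-atomic piece would produce infinitely many orthogonal indicators, contradicting finite-dimensionality. Everything else is a routine transcription of the standard structure of the finite simplex.
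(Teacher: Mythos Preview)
Your proof is correct and follows essentially the same approach as the paper: bound the number of $P_{h_0}$-atoms by $d$ using orthogonality in $H$ (the paper phrases this directly in terms of pairwise orthogonal projections, you route it through the isometry $U$ of Proposition~\ref{prop:L2H}, which is the same idea), and then read off the simplex structure of the positive unit sphere in $L^1$ over a finite atomic measure. One small caveat: your concluding remark that each spectral projection $P_{A_j}$ has rank one is unjustified and in general false---an atom of $P_{h_0}$ can correspond to a projection of higher rank (e.g.\ the trivial PVM $P_\Om=I$ on a one-point $\Om$)---but this aside plays no role in the actual proof of the corollary.
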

\begin{proof} We use the notation of the preceding theorem.
Let $d = \dim(H)$. Since the cardinality of any family of pairwise orthogonal projections in $H$ is at most $d$, we infer that $\Omega$ is the union of at most $d$ atoms for the measure $P_{h_0}$, say $A_1,\dots,A_k \in \calF$. Then every element of $\{ \varrho \in L^1(\Omega, P_{h_0}): \varrho \geq 0, \norm{\varrho}_1 = 1\}$ is a convex combination of the normalised indicator functions $\one_{A_j}/P_{h_0}(A_j)$, $j=1,\dots,k$, of these atoms, which are its extreme points.
\end{proof}

The following result connects Theorem \ref{thm:Holevo-for-P} to the discussion in the previous section.

\begin{theorem}\label{thm:convex}
     If $\calM$ is abelian, then the inclusion mapping $\ESH\subseteq \SH$ induces an affine isomorphism $$\ESHM \cong \SHM.$$
\end{theorem}
\begin{proof}
    As is well known, (see, e.g., \cite[Proposition 1.21]{Takesaki}), every abelian von Neumann algebra $\calM$ over a separable Hilbert space $H$ is generated by a single self-adjoint element $a$. Applying Theorem \ref{thm:Holevo-for-P} to the projection-valued measure of such a generating selfadjoint element, we obtain the stated result.
\end{proof}

Upon representing the abelian von Neumann algebra $\calM$ as a space $L^\infty(X,\mu)$, every normal state corresponds to a density $\rho\in L^1(X,\mu)$, and as such has an $L^2$ ``square-root'' vector representative
$\psi=\rho^{1/2}$, since $\omega_\rho(f)=\int_X f\rho\ud\mu=\iprod{f\psi}{\psi}$
for all $f\in L^\infty(X,\mu)$. This is consistent with the general standard-form
representations for von Neumann algebras (see, e.g., \cite{Haagerup75}), but here we gave an explicit construction aligned with the operational Holevo-space
viewpoint taken here.

As a final result for this section, we turn to the situation where $\calM$ is the abelian von Neumann algebra generated by the PVMs of multiple, possibly unbounded, normal operators $a_1,\dots,a_n$, assuming these PVMs commute.
If all operators $a_1,\dots,a_n$ are bounded, this is equivalent to assuming that $a_1,\dots,a_n$ commute and $\calM$ is the abelian von Neumann algebra generated by $a_1,\dots,a_n$.

Under the stated assumptions, the joint spectral theorem (see, e.g., \cite[Theorem 4.1]{Landsman}, \cite[Theorem 5.21]{Schm})
guarantees the existence of a unique projection–valued measure
\begin{equation}\label{eq:PVM_commuting_normal_operators}
    P:\calB(\sigma(a_1, \dots,a_n))\to \Proj(H),
\end{equation}
where $\calB$ stands for the Borel $\sigma$-algebra and $\sigma(a_1, \dots, a_n)$ is the joint spectrum,
such that for every bounded Borel function $f\colon X\to\C$ we have
$$
f(a_1,\dots,a_n)= \int_{\sigma(a_1, \dots,a_n)} f\ud P.
$$
In this situation, the von Neumann algebra generated by the projections in the range of $P$ equals $\calM$.

Let $\sigma(a_1,\dots,a_n)$ denote the support of $P$.
Applying the result of the preceding paragraph, we arrive at the following theorem.
When $X$ is a topological Hausdorff space, we denote by $C_{\rm b}(X)$ the Banach space of all bounded continuous functions on $X$.

\begin{corollary}\label{thm:normal}
Let $a_1,\dots,a_n$ be normal operators on $H$ whose PVMs commute, and let
$\calM$ be the von Neumann algebra generated by their spectral measures. Let $h_0\in H$ be any unit vector such that the probability measure $P_{h_0}$ has the same null sets as $P$. Then we have affine isomorphisms
$$\SHM \,\cong\, \ESHM \ \, \cong \,
\Bigl\{ f\in L^1(\sigma(a_1,\dots,a_n), P_{h_0}):\ f\ge 0,\ \|f\|_1=1\Bigr\}.
$$
\end{corollary}
\begin{proof}
    Apply Theorem \ref{thm:Holevo-for-P} to the PVM given by equation \eqref{eq:PVM_commuting_normal_operators}.
\end{proof}

\section{Classicalisation}\label{sec:hidden}

The results proved up to this point can be summarised in the following ``classicalisation'' theorem, although it would be more accurate to say that the theorem provides a way to ``lift quantum observables to classical observables''.

\begin{theorem}[Classicalisation]\label{thm:hidden}
Let $\calM$ be a von Neumann algebra over a Hilbert space $H$. There exist
\begin{itemize}
 \item a Hausdorff topological space $X$
 \item a continuous injective mapping $a\mapsto \wh a$ from $\calM$ to $C_{\rm b}(X)$
\end{itemize}
such that for all $a\in \calM$ and $\ket{h}\in\ESH$ we have
$$ \wh a([h]_{\calM}) = \iprod{ah}{h}.$$
\end{theorem}

\begin{proof}
Take $X:=\ESHM$ endowed with the Hausdorff topology $\tauMt$;
For $a\in\calM$ define $\wh a\in C_{\rm b}(X)$ by
\[
\wh a([h]_{\calM}) := \iprod{ah}{h}.
\]
This is well defined, for if $\ket{h}\simM \ket{h'}$, then
$\iprod{ah}{h}=\iprod{ah'}{h'}$ for all $a\in\calM$. Clearly $\wh a$ is continuous with respect to $\tauMt$.

To see that $a\mapsto \wh a$ is injective, suppose that $a,b\in \calM$ satisfy $\wh a = \wh b$. This means that $\iprod{ah}{h} = \iprod{bh}{h}$ for all norm one vectors $h\in H$, and therefore $a = b$ by a standard polarisation argument.
\end{proof}

This theorem can be viewed as a ``sharp'' version of the hidden variables theorem for unsharp observables of \cite[Section 15.4]{Nee}.
We may state a corresponding version for mixed states, taking $X = \SHM$; in that case, we may view $X$ as Hausdorff space endowed with the topology $\tauMt$.

Classically, the points of a state spaces correspond to the pure states, and by passing to equivalence classes modulo indiscernibility, one factors out the points that cannot be distinguished by the observables at hand. In the quantum context, this suggests to promote the equivalence classes of pure states modulo indiscernibility to be the ``distinguishable points'' of some newly formed space. The theorem shows that the quantum observables can then be lifted to pointwise defined scalar-valued functions on this new space, thus creating a classical model for them.

Let us take a closer look at the case where $\calM$ is the von Neumann algebra generated by a single bounded normal operator $a$ acting on a Hilbert space $H$. Under these assumptions, a classical theorem of von Neumann (see also \cite[Section 9.5]{Nee}) asserts that $\calM$ equals the bounded Borel calculus of $a$,
$$\calM = \bigl\{f(a):\, f\in B_{\rm b}(\sigma(a))\bigr\},$$
where $\sigma(a)$ is the spectrum of $a$. The von Neumann algebra $\calM$ includes (and is in fact generated by) the orthogonal projections in the range of the projection-valued measure $P$ associated with $a$ through the spectral theorem. For any Borel subset $B \subseteq \sigma(a)$, the lifted observable
$\wh{P_B} \in B_{\rm b}(X)$ assigns to $[h]_{\calM}$ the number $\iprod{P_Bh}{h}$. In quantum mechanical language, this number is interpreted as ``the probability of observing an outcome contained in $B$ upon measuring the state $\ket{h}$ using the observable $a$''. In this sense, the lifted classical observable contains all measurement statistics of the quantum observable.

\section{The free particle}\label{sec:free-particle}

As a first (and easy) application of the theory developed up to this point, we determine the Holevo space of the position observables for the free particle in $\R^d$ and show that it can be represented as the set of classical probability densities on $\R^d$. From the abstract results of the preceding section, we know that it takes the form of an abstract simplex on their (joint) spectrum. Rather than applying the abstract theory, we compute the Holevo space from first principles, thus illustrating the scope of the theory by means of a concrete and easily computed instance.

\subsection{The case $d=1$}
Let us first take $d=1$ and let $H = L^2(\R)$ be the Hilbert space of a free particle moving along the real line $\R$.
The position observable for this particle is given by the projection-valued measure
$P:\calB(\R)\to {\rm Proj}(L^2(\R))$, where $\calB(\R)$ is the Borel $\sigma$-algebra of $\R$, defined as
$$ (P_B h)(x) =\one_{B}(x)h(x) $$
for $B\in \calB(\R)$ and $h\in L^2(\R)$.

Suppose now that the pure states $\ket{h}$ and $\ket{h'}$ are indiscernible for $P$, that is,
$$
\int_{\R} \one_B(x)|h(x)|^2 \ud x =\iprod{P_B h}{h} = \iprod{P_B h'}{h'}= \int_{\R} \one_B(x)|h'(x)|^2 \ud x
$$
for all Borel sets $B\subseteq\R$.
This happens if and only if $|h(x)| = |h'(x)|$ for almost all $x\in \R$.
It follows that the Holevo space associated with $P$ is given by
\begin{align}\label{eq:x-meas}
 \mathscr{P}(L^2(\R))/_{\!P} = \Bigl\{ \rho\in L^1(\R):\, \rho\ge 0,\, \int_\R\rho(x)\ud x=1 \Bigr\},
\end{align}
the bijective correspondence being given by
$$ \rho \leftrightarrow |h|^2.$$
This result is a special case of Theorem \ref{thm:Holevo-for-P}, where one may take $h_0$ to be any normalised strictly positive function in $L^2(\R)$, e.g., the density of the Gaussian distribution.

\subsection{The general case}
The same computation can be performed for one or more coordinate position observables for the free particle in $\R^d$; this particle is modelled by the Hilbert space $H = L^2(\R^d)$. In what follows we take $d=3$ for the sake of definiteness.

Let us first consider the position observable along one coordinate axis, say the $x$-coordinate. It is given by the projection-valued measure
$P^{(x)}:\calB(\R)\to {\rm Proj}(L^2(\R^3))$ given by
$$ (P^{(x)}_B h)(x,y,z) =\one_{B}(x)h(x,y,z), \quad B\in \calB(\R), \ h\in L^2(\R^3).$$
We now find that the pure states $\ket{h}$ and $\ket{h'}$ are indiscernible for the von Neumann algebra $\calM_x$ generated by $P^{(x)}$ if and only if for almost all $x\in \R$ we have
$$\int_\R\!\int_\R |h(x,y,z)|^2\ud y\ud z = \int_\R\!\int_\R |h'(x,y,z)|^2\ud y\ud z.$$
Accordingly, the Holevo space for the $x$-observable is the same as before, namely,
\begin{align*}
\mathscr{P}(L^2(\R^3))/_{\!P^{(x)}} = \Bigl\{ \rho\in L^1(\R):\, \rho\ge 0,\, \int_\R \rho(x)\ud x=1 \Bigr\}.
\end{align*}
This time, the correspondence is given by
$$ \rho\leftrightarrow \int_\R\!\int_\R |h(\cdot,y,z)|^2\ud y\ud z .$$
In the same way as before, this result also follows as a special case of Theorem \ref{thm:Holevo-for-P}.

The physics interpretation of this result is that if one is able to observe only the $x$-coordinate of a particle in $\R^3$, then the particle appears to the observer as nonnegative probability densities $\rho$ in one variable.

We could also perform a joint measurement of the
coordinate positions, say the positions along the $x$- and $y$-coordinates. This joint measurement is modelled by the product PVM
\(P^{(x,y)}: \mathscr{B}(\R^2)\to \Proj(L^2(\R^3))\) given by
\[
P^{(x,y)}(B\times C)=P^{(x)}(B)P^{(y)}(C)\qquad (B,C\in\mathcal B(\mathbb R)).
\]
By similar computations as before,
the Holevo space for this PVM is given as
\begin{align}\label{eq:xy-meas}
 \mathscr{P}(L^2(\R^3))/_{\!P^{(x,y)}} =
\Bigl\{ \rho\in L^1(\R^2):\, \rho\ge 0,\,
\int_\R\!\int_\R\ \rho(x,y)\ud x\ud y=1 \Bigr\},
\end{align}
the interpretation being that if one is able to observe only the $x$- and $y$-coordinates of a particle in $\R^3$, then the particle appears to the observer as nonnegative probability densities $\rho$ in two variables.

One should carefully distinguish between indiscernibility for $P^{(x,y)}$ and for the pair
$\{P^{(x)},P^{(y)}\}$. The pure states $\ket{h}$ and $\ket{h'}$ are indiscernible for $\{P^{(x)},P^{(y)}\}$ when they have the same $x$- and $y$-marginals, i.e.,
\[
\int_{\R}\!\int_{\R}|h(x,y,z)|^2\ud  y\ud  z=\int_{\R}\!\int_{\R}|h'(x,y,z)|^2\ud y\ud z\quad\text{ a.a.\ }x\in\R,
\]
and
\[
\int_{\R}\!\int_{\R}|h(x,y,z)|^2\ud  x\ud  z=\int_{\R}\!\int_{\R}|h'(x,y,z)|^2\ud x\ud z\quad\text{ a.a.\ }y\in\R.
\]
This condition forgets correlations between $x$ and $y$.
The corresponding Holevo space is naturally identified with
\[
\mathscr{P}(L^2(\R^3))/_{\!\{P^{(x)},P^{(y)}\}}
=\Bigl\{(\rho_x,\rho_y):\, \rho_x,\rho_y\in L^1(\R),\ \rho_x,\rho_y\ge 0,\
\|\rho_x\|_1=\|\rho_y\|_1=1\Bigr\}.
\]
This identification is surjective: given $(\rho_x,\rho_y)\in
\mathscr{P}(L^2(\R^3))/_{\!\{P^{(x)},P^{(y)}\}}$,
choose $\phi,\psi,\chi\in L^2(\R)$ with $|\phi|^2=\rho_x$, $|\psi|^2=\rho_y$, and $\|\chi\|_2=1$,
and consider the function $h(x,y,z):=\phi(x)\psi(y)\chi(z)$.

A pair of states $\ket{h}$ and $\ket{h'}$ that is discerned by $\sim_{P^{(x,y)}}$ but not by  $\sim_{\{P^{(x)},P^{(y)}\}}$ can be constructed as follows.
Choose disjoint Borel sets $A_1,A_2,B_1,B_2\subseteq\R$ with
$0<|A_i|,|B_j|<\infty$ and fix $\chi\in L^2(\R)$ with $\|\chi\|_2=1$. Define
the ``Bell-type'' states
\begin{align*}
h(x,y,z) & := \frac1{\sqrt2}\Bigl(\frac{\one_{A_1}(x)\one_{B_1}(y)}{|A_1|^{1/2}|B_1|^{1/2}}
+\frac{\one_{A_2}(x)\one_{B_2}(y)}{|A_2|^{1/2}|B_2|^{1/2}}\Bigr)\chi(z),
\\
h'(x,y,z) & := \frac1{\sqrt2}\Bigl(\frac{\one_{A_1}(x)\one_{B_2}(y)}{|A_1|^{1/2}|B_2|^{1/2}}
+\frac{\one_{A_2}(x)\one_{B_1}(y)}{|A_2|^{1/2}|B_1|^{1/2}}\Bigr)\chi(z).
\end{align*}
Then $h\sim_{\{P^{(x)},P^{(y)}\}}h'$, but
\[
\iprod{P^{(x)}(A_1)P^{(y)}(B_1)h}{h}=\frac12,
\qquad
\iprod{P^{(x)}(A_1)P^{(y)}(B_1)h'}{h'}=0,
\]
so $h\not\sim_{\!P^{(x,y)}}h'$.

The Holevo spaces for the (joint or separate) $x$-, $y$-, and $z$-observables can be treated in the same way.

\begin{remark}
Using the canonical isometric identification
\[
L^{2}(\R^{2})\cong L^{2}(\R)\otimes L^{2}(\R)
\]
(with slight abuse of notation, the right-hand side denotes the Hilbert space completion of the algebraic tensor product), the sharp joint position PVM $P^{(x,y)}$ is analogous to Charlie's joint polarisation measurement in the EPR experiment discussed in Section \ref{subsec:Holevo-EPR}, while the pair $\{P^{(x)},P^{(y)}\}$ of marginal PVMs is analogous to Alice's and Bob's separate measurements in this experiment.

This analogy is, however, purely mathematical. In an actual EPR/Bell experiment the tensor product
decomposition $H=\C^2\otimes \C^2$ is fixed by physically distinguished subsystems, which selects the local observable algebras
$\calL(\C^2)\otimes I$ and $I\otimes\calL(\C^2)$. For a single particle in $\R^{2}$, interpreting the two tensor factors in the tensor product $L^{2}(\R^{2})\cong L^{2}(\R)\otimes L^{2}(\R)$ as ``Alice'' and ``Bob'' would also require additional physical structure.
\end{remark}

\section{The EPR and Bell experiments}\label{sec:EPR_and_Bell}

In this section we show how Holevo spaces provide an explicit way to track the
\emph{relational} state space in multi-observer scenarios, culminating in the Bell experiment. We proceed step by step: we first analyse the single-qubit case, then
move to the two-qubit setting of the EPR experiment, and finally incorporate variable polariser settings as in Bell tests. This leads to a concrete geometric
description that makes two limitations transparent: (a) from their local data, Alice and Bob cannot reconstruct Charlie's joint description for a fixed choice
of settings, and (b) from Charlie's perspective there is, in general, no single `global' Holevo model that simultaneously captures all polariser settings at once.

\subsection{The qubit}\label{subsec:Holevo-qubit}

As a preparation for the next two paragraphs, in this paragraph we compute the Holevo space of the qubit (see also \cite[Section 15.4]{Nee}).

\subsubsection{Basic set-up}
Mathematically, the qubit is modelled on the Hilbert space $H = \C^2$.
Keeping in mind that norm-one vectors define the same pure state if they differ by a multiplicative scalar of
modulus one, in terms of the standard basis $\{\ket{0},\ket{1}\}$ of $\C^2$, every pure state can be written as
\begin{align*}\cos (\theta/2) \ket{0}+e^{i\eta}\sin (\theta/2) \ket{1}, \quad \theta\in [0,\pi], \ e^{i\eta} \in \T,
\end{align*}
where $\T = \S^1$ is the unit circle in the complex plane. In this representation, the angle $\theta\in [0,\pi]$ is unique; moreover,
for $\theta\in (0,\pi)$ the angle $\eta$ is unique up to a multiple of $2\pi$;
for $\theta \in\{0,\pi\}$ the angle $\eta$ can be chosen arbitrarily.
In this polar coordinate description, the pure state space $\mathscr{P}(\C^2)$ is represented as the unit sphere $\S^2$ in $\R^3$, the so-called {\em Bloch sphere}.

Every selfadjoint operator on $\C^2$ admits a unique decomposition in the basis ${I,\sigma_1,\sigma_2,\sigma_3}$, where $I$ is the identity matrix and
$$\sigma_1 = \begin{pmatrix}\, 0 & \, 1\, \\ 1 & 0  \end{pmatrix}, \quad
  \sigma_2 = \begin{pmatrix}\, 0 & \,-i\, \\ i & 0  \end{pmatrix}, \quad
  \sigma_3 = \begin{pmatrix}\, 1 & \, 0\, \\ 0 & -1 \end{pmatrix}$$
are the {\em Pauli matrices}, which correspond to the spin observables associated with polarisation measurements along the three coordinate axes. Each of the observables
$\sigma_1$, $\sigma_2$, $\sigma_3$ has spectrum $\{-1,1\}$. In the physics literature, these observables are frequently referred to as the {\em Pauli-$x$, Pauli-$y$, and Pauli-$z$ observables}.

In the discussion below, we will specifically work with $\sigma_3$, which has the advantage that the computational basis is also an orthonormal eigenbasis
for the $\{+1,-1\}$-valued observable $\sigma_3$.
To simply future notation, we write the outcome space $\{+1,-1\}$ as $\{+,-\}$.
In this notation, the PVM $P_3$ of $\sigma_3$ is uniquely given on the singletons of $\{+,-\}$ by
$$ P_3(\{+\}) = \ket{0}\bra{0}, \quad P_3(\{-\}) = \ket{1}\bra{1}.$$

\subsubsection{Bloch sphere description of the Holevo space}

Using the Bloch sphere description for pure states, we will determine when two pure states
\begin{align*}
\ket{h} = \cos (\theta/2) \ket{0} + e^{i\eta }\sin(\theta/2)  \ket{1} \ \hbox{ and } \  \ket{h'} = \cos (\theta'/2) \ket{0} + e^{i\eta' }\sin(\theta'/2)  \ket{1}
\end{align*}
are indiscernible for the Pauli-$z$ observable $\sigma_3$, or equivalently, for the PVM $P_3$ of $\sigma_3$.
According to the observation in Example \ref{ex:spin},
this happens if and only if
\begin{align*}
\iprod{\sigma_3 h}{h} = \iprod{\sigma_3 h'}{h'}.
\end{align*}
A direct computation gives
\begin{align*}
\iprod{ \sigma_3 h}{h}
=\cos\theta
\ \ \hbox{ and } \ \ \iprod{\sigma_3 h'}{h'}=\cos\theta'.
\end{align*}
Hence
\[
\ket h\sim_{P_3}\ket{h'}
\quad\Longleftrightarrow\quad
\cos\theta=\cos\theta'.
\]
Since $\theta,\theta'\in[0,\pi]$ and $\cos$ is injective on this interval, this is equivalent to $\theta=\theta'$.

This represents the Holevo space of $P_3$ as
\begin{align}\label{eq:Holevo-qubit-polar}\mathscr{P}(\C^2)/_{P_3} \ \cong \ [0,\pi].
\end{align}
For each $\theta\in (0,\pi)$, the corresponding equivalence class is
the circles parallel to the equator at angle $\theta$;
for $\theta = 0$ and $\theta=\pi$ these circles degenerate to a point (the north and south poles, respectively).

\subsubsection{Intrinsic description of the Holevo space}

An intrinsic ``spectral'' description of the Holevo space for $P_3$ can be given in line with Corollary \ref{cor:holevo_finite_dim}. According to this theorem, the Holevo space is the space of probability distributions on the outcome space $\{+,-\}$; every such measure is (trivially) absolutely continuous with respect to $P_3$.
This leads to the description of the Holevo space as
\begin{align}\label{eq:Holevo-qubit}\mathscr{P}(\C^2)/_{P_3} \ \cong \ \Delta^1,
\end{align}
where
$$ \Delta^1 :=\Bigl\{p = (p_+,p_-)\in[0,1]^2:\ p_{+}+p_{-}=1\Bigr\}.$$
This can be shown from first principles as follows.
Let $\ket{h}=c_0 \ket{0}+c_1 \ket{1}$ with $|c_0|^2+|c_1|^2=1$ be a pure state.
If both $c_0\neq 0$ and $c_1\neq 0$, the indiscernibility class consists of all vectors obtained
by varying the relative phase:
\[
[h]_{P_3}
=\Bigl\{c_0 \ket{0}+\zeta c_1 \ket{1}:\ \zeta\in\T\Bigr\}.
\]
In the cases $c_0=0$ and $c_1=0$ the equivalence class collapses to a singleton: if $c_0=0$, then $[h]_{P_3} =\{\ket{1}\}$; if $c_1=0$, then $[h]_{P_3} = \{ \ket{0}\}$.

Under the identification $\mathscr{P}(\C^2)/_{P_3}\cong\Delta^1$,
the Holevo lifts of the two projections are the coordinate maps:
\[
\widehat{P_3(\{+ \})}(p)= p_+,  \quad  \widehat{P_3(\{-\})}(p)= p_-,
\] for $p = (p_+, p_-) \in \Delta^1$.

The Bloch sphere description \eqref{eq:Holevo-qubit-polar} is recovered from the intrinsic description \eqref{eq:Holevo-qubit} by writing $p\in \Delta^1$ as
$$p = (p_+,p_-) = (\cos^2(\theta/2),\, \sin^2(\theta/2))$$
for a unique $\theta\in [0,\pi]$.
Under this rewriting, the above Holevo lifts take the form of Bell's hidden variables for the qubit \cite{Bell1966}; see also \cite{Hol} and \cite[Section 15.4]{Nee}.

\subsection{The EPR experiment}\label{subsec:Holevo-EPR}

Next we turn to correlation measurements in a two-qubit system.

\subsubsection{Basic set-up} The two-qubit system corresponds to the choice $H = \C^2\otimes \C^2$.

We consider the standard \emph{orthonormal} computational basis of $\C^2\otimes\C^2$ given by
\[
\ket{00}= \begin{pmatrix}1\\0\end{pmatrix}\otimes\begin{pmatrix}1\\0\end{pmatrix},\ \
\ket{01}= \begin{pmatrix}1\\0\end{pmatrix}\otimes\begin{pmatrix}0\\1\end{pmatrix},\ \
\ket{10}= \begin{pmatrix}0\\1\end{pmatrix}\otimes\begin{pmatrix}1\\0\end{pmatrix},\ \
\ket{11}= \begin{pmatrix}0\\1\end{pmatrix}\otimes\begin{pmatrix}0\\1\end{pmatrix}.
\]

In the EPR experiment \cite{EPR}, Alice and Bob perform polarisation measurements on their respective subsystems.
We model these by the PVMs $P_{3;A}$ and $P_{3;B}$ on $\{+,-\}$
associated with $\sigma_3\otimes I$ and $I\otimes\sigma_3$, respectively, uniquely determined by
\[
P_{3;A}(\{\eps\})=P_3(\{\eps\})\otimes I,\qquad
P_{3;B}(\{\eps\})=I\otimes P_3(\{\eps\}),\qquad \eps\in\{+,-\}.
\]
An external observer Charlie can record the \emph{pair} of outcomes. This is described by the product PVM
$P_{3;C}$ on $\{+,-\}\times\{+,-\}$, uniquely determined by
\begin{equation*}
P_{3;C}(\{(\eps_A,\eps_B)\})
= P_3(\{\eps_A\})\otimes P_3(\{\eps_B\}),\qquad (\eps_A,\eps_B)\in\{+,-\}\times\{+,-\} .
\end{equation*}

\begin{remark}
One should not confuse the PVM $P_{3;C}$ on $\{+,-\}\times\{+,-\}$
with the PVM on $\{+,-\}$ of $\sigma_3\otimes\sigma_3$.
\end{remark}

We are interested in indiscernibility of pure states $\ket{h}$ and $\ket{h'}$ with respect to $P_{3;C}$, i.e.
\[
\iprod{P_{3;C}(F)h}{h}=\iprod{P_{3;C}(F)h'}{h'}
\qquad\text{for all }F\subseteq\{+,-\}\times\{+,-\}.
\]
Equivalently, it suffices to check this for the four singletons $F=\{(\eps_A,\eps_B)\}$.
If $\ket{h}\sim_{P_{3;C}}\ket{h'}$, then in particular $\ket{h}$ and $\ket{h'}$ have the same marginals for $P_{3;A}$ and $P_{3;B}$.
The converse does not hold: equality of the marginals does not determine the joint distribution.

\subsubsection{The Holevo space for $P_{3; C}$}\label{subsec:C-EPR}
Let us start by determining a representation of the Holevo space for the observable $P_{3;C}$. As in the case of the single qubit, a ``Bloch sphere type'' description can be given in terms of angles, but the details are cumbersome.
In contrast, the intrinsic ``spectral'' description offered by Theorem \ref{thm:Holevo-for-P} is immediate, and can again be described explicitly without much effort.

Theorem \ref{thm:Holevo-for-P} represents the Holevo space for $P_{3;C}$ as the set of probability measures
on the outcome space $\{+,-\}\times \{+,-\}$; again every such measure is (trivially) absolutely continuous with respect to $P_{3;C}$.
This leads to the description of the Holevo space as the simplex
$$ \Delta^3 :=\Bigl\{p\in[0,1]^4:\ p_{++} \,+\, p_{+-} \,+\, p_{-+} \,+\, p_{--} \,=\,1\Bigr\}.$$
The bijective correspondence
\begin{align}\label{eq:Holevo-3C} \mathscr{P}(\C^2\otimes\C^2)/_{P_{3;C}}  \ \cong\ \Delta^3
\end{align}
can be made explicit as follows. For every pure state $\ket{h}$ define
\[
\mu_h(\eps_A,\eps_B)
:=\iprod{P_{3;C}(\{(\eps_A,\eps_B)\})h}{h}, \qquad (\eps_A,\eps_B) \in \{+,-\}\times\{+,-\}.
\]
Then $\mu_h = (\mu_h(++),\, \mu_h(+-),\, \mu_h(-+),\, \mu_h(--)) \in \Delta^3$, and by the same argument as in the qubit case, the mapping
$ [h]_{P_{3;C}} \mapsto \mu_h$ sets up the bijective correspondence \eqref{eq:Holevo-3C}.

In this representation, similar calculations as in the single-qubit case show that we can describe the equivalence classes generically as
$\T\times \T \times \T$ for states in the interior of $\Delta^3$,
with degenerations to $\T\times \T$ or $\T$ or a singleton depending on whether one, two, or three numbers $p_{\pm,\pm}$ vanish.

The Holevo lift of the
projections $P_{3;C}(\{(\eps_A,\eps_B)\})$ are the coordinate functions:
\[
\widehat{P_{3;C}(\{(\eps_A,\eps_B)\})}(p)=p_{\eps_A,\eps_B},
\qquad (\eps_A,\eps_B)\in\{+,-\}\times \{+,-\},
\]
for $p = (p_{++},\, p_{+-},\, p_{-+},\, p_{--} )\in \Delta^3$.

\subsubsection{The Holevo space for the pair $\{P_{3;A},P_{3;B}\}$}\label{subsec:AB-EPR}

Let $P_{3;A}$ and $P_{3;B}$ be the $\{+,-\}$-valued PVMs associated with
$\sigma_3\otimes I$ and $I\otimes\sigma_3$, uniquely defined on the singletons by
\[
P_{3;A}(\{\eps\}) := P_3(\{\eps\})\otimes I,\qquad
P_{3;B}(\{\eps\}) := I\otimes P_3(\{\eps\}),\qquad \eps\in\{+,-\},
\]
with $P_3$ the PVM of $\sigma_3$.
We recall that pure states $\ket{h}$ and $\ket{h'}$ are said to be \emph{indiscernible}
for the pair $\{P_{3;A},P_{3;B}\}$, notation $\ket{h}\sim_{\{P_{3;A}, P_{3;B}\}}\ket{h'}$,
if their marginal outcome measures coincide, i.e.\ for all $F\subseteq \{+,-\}$,
\[
\iprod{P_{3;A}(F)h}{h}=\iprod{P_{3;A}(F)h'}{h'}
\quad\text{and}\quad
\iprod{P_{3;B}(F)h}{h}=\iprod{P_{3;B}(F)h'}{h'}.
\]
The equivalence class of $\ket{h}$ under $\sim_{{\{P_{3;A}, P_{3;B}\}}}$ is denoted by $[h]_{{\{P_{3;A}, P_{3;B}\}}}$.
Importantly, we do \emph{not} pass to the von Neumann algebra generated by $P_{3;A}$ and $P_{3;B}$ (as shown in Example~\ref{ex:counterexample_prop_indisc_specmeas} this would alter the equivalence relation).

For $\ket{h}\in\calP(C^2\otimes \C^2)$ and $\eps\in\{+,-\}$ define the marginal probabilities
\[
\mu_{h,A}(\eps):=\iprod{P_{3;A}(\{\eps\})h}{h},\qquad
\mu_{h,B}(\eps):=\iprod{P_{3;B}(\{\eps\})h}{h}.
\]
Then $\mu_{h,A} := (\mu_{h,A}(+), \mu_{h,A}(-))\in\Delta^1$ and $\mu_{h,B} := (\mu_{h,B}(+), \mu_{h,B}(-))\in\Delta^1$ depend only on the
$\sim_{\{P_{3;A},P_{3;B}\}}$--equivalence class of $\ket{h}$. Hence, the mapping
\begin{align}\label{eq:surj}
[h]_{\{P_{3;A},P_{3;B}\}}\mapsto (\mu_{h,A},\mu_{h,B})
\end{align}
is well defined and injective from $\calP(C^2\otimes\C^2)/_{\!{\{P_{3;A},P_{3;B}\}}}$ to $\Delta^1\times\Delta^1$.
We claim that this mapping is also surjective. Fix an arbitrary point $(\mu_A,\mu_B)\in\Delta^1\times\Delta^1$ and write
$\mu_A=(\mu_A(+),\mu_A(-))$ and $\mu_B=(\mu_B(+),\mu_B(-))$.
Set
\[
\ket{\phi_A}:=\mu_A(+)^{1/2}\ket{0}+\mu_A(-)^{1/2}\ket{1},\qquad
\ket{\phi_B}:=\mu_B(+)^{1/2}\ket{0}+\mu_B(-)^{1/2}\ket{1},
\]
and define $\ket{h}:=\ket{\phi_A}\otimes\ket{\phi_B}$.
Then $\|h\|=1$ and
$$\mu_{h,A}(\eps)=\mu_A(\eps),\quad  \mu_{h,B}(\eps)=\mu_B(\eps).$$ This proves surjectivity of the mapping \eqref{eq:surj}. It follows that this mapping establishes a bijective correspondence
\[
\calP(\C^2\otimes\C^2)/_{\!\{P_{3;A},P_{3;B}\}}\ \cong\ \Delta^1\times\Delta^1.
\]

\subsubsection{Charlie versus Alice \& Bob}\label{subsec:interpreting-Holevo}

Given the state $$\ket{h} \,=\, c_{00}\ket{00} \,+\, c_{01}\ket{01} \,+\, c_{10}\ket{10} \,+\, c_{11}\ket{11}$$ with $|c_{00}|^2 \,+\, |c_{01}|^2 \,+\, |c_{10}|^2 \,+\, |c_{11}|^2=1$,
Alice and Bob can only access their respective marginals, and thus they have access
to the point
\begin{align}\label{eq:holevo-point-AB}
\Bigl(
\bigl(|c_{00}|^2+|c_{01}|^2,\,|c_{10}|^2+|c_{11}|^2\bigr),\,
      \bigl(|c_{00}|^2+|c_{10}|^2,\,|c_{01}|^2+|c_{11}|^2\bigr)
\Bigr)
\end{align}
in $\Delta^1\times\Delta^1$. In this sense $\Delta^1\times\Delta^1$ encodes what Alice and Bob can  ``infer locally'':
Alice can infer the point $(|c_{00}|^2+|c_{01}|^2,\,|c_{10}|^2+|c_{11}|^2)\in \Delta^1$ and Bob can infer the point $(|c_{00}|^2+|c_{10}|^2,\,|c_{01}|^2+|c_{11}|^2)\in \Delta^1.$
Charlie, on the other hand, has access to the point
\begin{align}\label{eq:holevo-point-C}
(|c_{00}|^2,\,|c_{01}|^2,\,|c_{10}|^2,\,|c_{11}|^2)
\end{align}
of $ \Delta^3$.
The ``forgetful mapping'' from $\Delta^3$ to $\Delta^1\times\Delta^1$ is the passage from \eqref{eq:holevo-point-C} to \eqref{eq:holevo-point-AB},
\begin{align*}
\ & \bigl(|c_{00}|^2,|c_{01}|^2,|c_{10}|^2,|c_{11}|^2\bigr)
\\ & \qquad \mapsto\,\Bigl((|c_{00}|^2+|c_{01}|^2,\,|c_{10}|^2+|c_{11}|^2),\ (|c_{00}|^2+|c_{10}|^2,\,|c_{01}|^2+|c_{11}|^2)\Bigr)
\end{align*} which ``forgets'' the correlations.

For example, the equivalence class of the Bell state $h_{\rm Bell}$
corresponds to the point $(\frac12,0,0,\frac12)\in \Delta^3$ for Charlie, and
to the point $((\frac12,\frac12), (\frac12,\frac12))\in\Delta^1\times\Delta^1$,
confirming that Charlie sees the correlations, while Alice and Bob each see a
fair coin locally.

\subsection{The Bell experiment}\label{subsec:Holevo-Bell}

The Bell experiment modifies the EPR experiment by introducing randomised polariser settings on either side of the experiment before performing the polarisation measurements.
Alice and Bob set the angles of their polarisers independently.

\subsubsection{Basic Set-up}
Consider the rotation matrices
\[
R^\gamma :=
\begin{pmatrix}
\cos\gamma & -\sin\gamma \\
\sin\gamma & \cos\gamma
\end{pmatrix},
\qquad (R^\gamma)^*=R^{-\gamma}
=
\begin{pmatrix}
\cos\gamma & \sin\gamma \\
-\sin\gamma & \cos\gamma
\end{pmatrix}.
\]
Fixing angles $\gamma_A,\gamma_B\in\R$, we also consider the unitary operator
\[
U^{\gamma_A,\gamma_B}:=R^{\gamma_A}\otimes R^{\gamma_B}.
\]

The Bell analogue of $P_{3;C}$ for the fixed setting $(\gamma_A,\gamma_B)$ is the PVM
$P_{3;C}^{\gamma_A,\gamma_B}$ on $\{+,-\}\times\{+,-\}$, uniquely determined by
\begin{equation*}
P_{3;C}^{\gamma_A,\gamma_B}(\{(\eps_A,\eps_B)\})
 := \bigl((R^{\gamma_A})^*P_3(\{\eps_A\})R^{\gamma_A}\bigr)\otimes
 \bigl((R^{\gamma_B})^*P_3(\{\eps_B\})R^{\gamma_B}\bigr).
\end{equation*}
We have
\begin{align}\label{eq:def-P3Cgamma}
P_{3;C}^{\gamma_A,\gamma_B}(F)=(U^{\gamma_A,\gamma_B})^*\,P_{3;C}(F)\,U^{\gamma_A,\gamma_B},
\qquad F\subseteq\{+,-\}^2.
\end{align}
An easy computation shows that for the Bell state
$\ket{h_{\rm Bell}}=\frac1{\sqrt2}(\ket{00}+\ket{11})$ we have
\begin{equation}\label{eq:P3C}
\begin{aligned}
\iprod{P_{3;C}^{\gamma_A,\gamma_B}(\{(+,+)\})h_{\rm Bell}}{h_{\rm Bell}}
&=\frac12\cos^2(\gamma_A-\gamma_B),\\
\iprod{P_{3;C}^{\gamma_A,\gamma_B}(\{(+,-)\})h_{\rm Bell}}{h_{\rm Bell}}
&=\frac12\sin^2(\gamma_A-\gamma_B),\\
\iprod{P_{3;C}^{\gamma_A,\gamma_B}(\{(-,+)\})h_{\rm Bell}}{h_{\rm Bell}}
&=\frac12\sin^2(\gamma_A-\gamma_B),\\
\iprod{P_{3;C}^{\gamma_A,\gamma_B}(\{(-,-)\})h_{\rm Bell}}{h_{\rm Bell}}
&=\frac12\cos^2(\gamma_A-\gamma_B).
\end{aligned}
\end{equation}

\subsubsection{The Holevo space for $P_{3;C}^{\gamma_A,\gamma_B}$ }
Let $\calM^{\gamma_A,\gamma_B}$ be the von Neumann algebra generated by the range of
$P_{3;C}^{\gamma_A,\gamma_B}$, and set $\calM:=\calM^{0,0}$.
Since $P_{3;C}^{\gamma_A,\gamma_B}=(P_{3;C})^{U^{\gamma_A,\gamma_B}}$, Theorem~\ref{thm:Holevo-invariance-under-rotation}
gives:

\begin{proposition}\label{prop:Bell-Holevo-same-as-EPR}
The mapping
\[
[h]_{P_{3;C}^{\gamma_A,\gamma_B}}\longmapsto [U^{\gamma_A,\gamma_B}h]_{P_{3;C}}
\]
is a well-defined bijection
\[
\mathscr P(\C^2\otimes\C^2)/\!\sim_{P_{3;C}^{\gamma_A,\gamma_B}}
\ \cong\
\mathscr P(\C^2\otimes\C^2)/\!\sim_{P_{3;C}}.
\]
\end{proposition}

\subsubsection{The Holevo space for $\{P_{3;A}^{\gamma_A}, P_{3;B}^{\gamma_B}\}$}

We now show that the Alice/Bob Holevo space for $\sim_{\{P_{3;A}^{\gamma_A}, P_{3;B}^{\gamma_B}\}}$ is independent of the polariser settings.

For a fixed choice of angles \(\gamma_A,\gamma_B\), let the rotated local PVMs on \(\{+,-\}\) be given by
\[
P_{3;A}^{\gamma_A}(\{\eps\})
:=\bigl((R^{\gamma_A})^*P_3(\{\eps\})R^{\gamma_A}\bigr)\otimes I,\quad
P_{3;B}^{\gamma_B}(\{\eps\})
:=I\otimes\bigl((R^{\gamma_B})^*P_3(\{\eps\})R^{\gamma_B}\bigr),
\]
and define \(\ket{h}\sim_{\{P_{3;A}^{\gamma_A} , P_{3;B}^{\gamma_B}\}}\ket{h'}\) to mean equality of the two marginals for
\(P_{3;A}^{\gamma_A}\) and \(P_{3;B}^{\gamma_B}\).
We claim that
\begin{align}\label{eq:Holevo-AB}\mathscr P(\C^2\otimes\C^2)/\!{}_{\{P_{3;A}^{\gamma_A} , P_{3;B}^{\gamma_B}\}} \ \cong \ \Delta^1\times\Delta^1
\end{align}
via the mapping
\[
[h]_{\{P_{3;A}^{\gamma_A} , P_{3;B}^{\gamma_B}\}}
\mapsto \bigl(\mu^{\gamma_A}_{h,A},\mu^{\gamma_B}_{h,B}\bigr),
\]
where
\[
\mu^{\gamma_A}_{h,A}(\eps):=\iprod{P_{3;A}^{\gamma_A}(\{\eps\})h}{h},\quad
\mu^{\gamma_B}_{h,B}(\eps):=\iprod{P_{3;B}^{\gamma_B}(\{\eps\})h}{h}.
\]
It is clear that this mapping is well defined and injective.
To see that it is surjective, fix $(\nu_A,\nu_B)\in\Delta^1\times\Delta^1$, with
$\nu_A=(\nu_A(+),\nu_A(-))$ and $\nu_B=(\nu_B(+),\nu_B(-))$.
Define single-qubit unit vectors
\begin{align*}
\ket{\phi_A}
&= \nu_A(+)^{1/2}\ket{0_{\gamma_A}}+\nu_A(-)^{1/2}\ket{1_{\gamma_A}}, \\
\ket{\phi_B}
&= \nu_B(+)^{1/2}\ket{0_{\gamma_B}}+\nu_B(-)^{1/2}\ket{1_{\gamma_B}},
\end{align*}
where for $\eta \in \{0,1\}$ $\ket{\eta_{\gamma}}:=(R^{\gamma})^*\ket{\eta}$, and set
\[
\ket{h}:=\ket{\phi_A}\otimes\ket{\phi_B}.
\]
Then $\mu^{\gamma_A}_{h,A}=\nu_A$ and $\mu^{\gamma_B}_{h,B}=\nu_B$, proving surjectivity. This proves \eqref{eq:Holevo-AB}.

\smallskip

In the unrotated setting,  under the identification
\[\mathscr P(\C^2\otimes\C^2)/\!{}_{\{P_{3;A},P_{3;B}\}}
\ \cong\ \Delta^1\times\Delta^1\] given by \[
[h]_{\{P_{3;A},P_{3;B}\}}\mapsto (\mu_{h,A},\mu_{h,B}),\]
where
\(\mu_{h,A}(\eps) = \iprod{P_{3;A}(\eps)h}{h} \) and \( \mu_{h,B}(\eps) = \iprod{P_{3;B}(\eps)h}{h}\),
the equivalence class of the Bell state
$$\ket{h_{\rm Bell}}:=\frac1{\sqrt2}(\ket{00}+\ket{11})$$
corresponds to the point
$((\frac12,\frac12),(\frac12,\frac12)) \in \Delta^1\times\Delta^1$.
The next theorem is the stronger statement that the equivalence class of
\(\ket{h_{\rm Bell}}\) always maps to the same point $ ((\frac12,\frac12),(\frac12,\frac12)) \in \Delta^1\times\Delta^1$,
independently of \((\gamma_A,\gamma_B)\).
Thus, the local $P_{3;A},P_{3;B}$-statistics of $\ket{h_{\rm Bell}}$ are independent of
$(\gamma_A,\gamma_B)$ (even though Charlie's joint statistics do depend on the setting).

\begin{theorem}[The rotated Bell state lies in the same equivalence class]\label{thm:hide-2-intrinsic}
Set
$U^{\gamma_A,\gamma_B}:=R^{\gamma_A}\otimes R^{\gamma_B}$.
Then for all angles $\gamma_A,\gamma_B$ one has equality of equivalence classes
\[
\bigl[(R^{\gamma_A}\otimes R^{\gamma_B}) h_{\rm Bell}\bigr]_{{\{P_{3;A}, P_{3;B}\}}}
=
\bigl[h_{\rm Bell}\bigr]_{{\{P_{3;A}, P_{3;B}\}}}.
\]
Equivalently, $\ket{(R^{\gamma_A}\otimes R^{\gamma_B})h_{\rm Bell}}$ and $\ket{h_{\rm Bell}}$ have the same
marginals for $P_{3;A}$ and $P_{3;B}$.
\end{theorem}

\begin{proof}
For $\eps\in\{+,-\}$ we have, with $U=R^{\gamma_A}\otimes R^{\gamma_B}$ as before,
\begin{align*}
\iprod{P_{3;A}(\{\eps\})\,Uh_{\rm Bell}}{Uh_{\rm Bell}}
& =\iprod{U^*(P_{3;A}(\{\eps\}))U\,h_{\rm Bell}}{h_{\rm Bell}}
=\iprod{P_{3;A}^{\gamma_A}(\{\eps\})\,h_{\rm Bell}}{h_{\rm Bell}}, \\
\iprod{P_{3;B}(\{\eps\})\,Uh_{\rm Bell}}{Uh_{\rm Bell}}
& =\iprod{U^*(P_{3;B}(\{\eps\}))U\,h_{\rm Bell}}{h_{\rm Bell}}
=\iprod{P_{3;B}^{\gamma_B}(\{\eps\})\,h_{\rm Bell}}{h_{\rm Bell}}.
\end{align*}

For $\gamma\in \{\gamma_A, \gamma_B\}$ define for $\eta \in \{0,1\}$ $\ket{\eta_{\gamma}}:=(R^{\gamma})^*\ket{\eta}=R^{-\gamma}\ket{\eta}$, so that
\[
P_{3;A}^{\gamma_A}(\{+\})=\ket{0_{\gamma_A}}\!\bra{0_{\gamma_A}}\otimes I,
\qquad
P_{3;B}^{\gamma_B}(\{+\})=I\otimes \ket{0_{\gamma_B}}\!\bra{0_{\gamma_B}}.
\]
Using $\ket{h_{\rm Bell}}=\frac1{\sqrt2}(\ket{00}+\ket{11})$, we compute
\begin{align*}
\iprod{P_{3;A}^{\gamma_A}(\{+\})\,h_{\rm Bell}}{h_{\rm Bell}}
&=\bigl\|(\bra{0_{\gamma_A}}\otimes I)h_{\rm Bell}\bigr\|^2 \\
&=\frac12\bigl\|\braket{0_{\gamma_A}}{0}\ket{0}+\braket{0_{\gamma_A}}{1}\ket{1}\bigr\|^2 \\
&=\frac12\bigl(|\braket{0_{\gamma_A}}{0}|^2+|\braket{0_{\gamma_A}}{1}|^2\bigr)
=\frac12.  
\intertext{Also,}
\iprod{P_{3;A}^{\gamma_A}(\{-\})h_{\rm Bell}}{h_{\rm Bell}}
& = 1-\iprod{P_{3;A}^{\gamma_A}(\{+\})\,h_{\rm Bell}}{h_{\rm Bell}}=\frac12.
\intertext{The same argument (now working in the second tensor factor) gives}
\iprod{P_{3;B}^{\gamma_B}(\{\eps\})\,h_{\rm Bell}}{h_{\rm Bell}} & =\frac12,
\qquad \eps\in\{+,-\}.
\end{align*}
Combining this with the initial conjugation identities, we conclude that for all $\eps\in\{+,-\}$,
\begin{align*}
\iprod{P_{3;A}(\{\eps\})\,Uh_{\rm Bell}}{Uh_{\rm Bell}}
& =\iprod{P_{3;A}(\{\eps\})\,h_{\rm Bell}}{h_{\rm Bell}},
\\
\iprod{P_{3;B}(\{\eps\})\,Uh_{\rm Bell}}{Uh_{\rm Bell}}
& =\iprod{P_{3;B}(\{\eps\})\,h_{\rm Bell}}{h_{\rm Bell}}.
\end{align*}
\end{proof}

\begin{remark}
The conclusion of Theorem~\ref{thm:hide-2-intrinsic} holds for each of the four Bell states
\[
\frac1{\sqrt2}(\ket{00}\pm\ket{11}),
\quad
\frac1{\sqrt2}(\ket{01}\pm\ket{10}).
\]
Similarly one checks that every rotation
$R^{\gamma_A}\otimes R^{\gamma_B}$ leaves the $P_{3;A}$- and $P_{3;B}$-marginals equal to
$(\frac12,\frac12)$.
\end{remark}

\subsubsection{Reshuffling the Holevo space}
Theorem \ref{thm:Holevo-invariance-under-rotation} implies that for each fixed polariser setting $(\gamma_A,\gamma_B)$, the conjugation
$U^{\gamma_A,\gamma_B}=R^{\gamma_A}\otimes R^{\gamma_B}$ identifies the ``rotated'' algebra
$\calM^{\gamma_A,\gamma_B}$ with the unrotated algebra $\calM$.
Importantly, this identification depends on the pair of angles $(\gamma_A,\gamma_B)$. In fact, for each {\em fixed} setting $(\gamma_A,\gamma_B)$, the Holevo space
$\mathscr P(\C^2\otimes\C^2)/\!\sim_{\mathcal M^{\gamma_A,\gamma_B}}$ is canonically identified
with $\Delta^3$ by mapping $\ket{h}\in \calP(\C^2\otimes\C^2)$ to its four joint outcome probabilities
\[
\bigl(p_{++}^{\gamma_A,\gamma_B}(h),\,p_{+-}^{\gamma_A,\gamma_B}(h),\,p_{-+}^{\gamma_A,\gamma_B}(h),\,p_{--}^{\gamma_A,\gamma_B}(h)\bigr),
\]
where $p^{\gamma_A,\gamma_B}_{\eps_A\eps_B}(h)
:=\iprod{P^{\gamma_A,\gamma_B}_{3;C}(\{(\eps_A,\eps_B)\})h}{h}$ for $\eps_A,\eps_B\in\{+,-\}$.
But when we compare different settings {\em inside a single chosen reference copy} of $\Delta^3$,
we use the setting-dependent conjugation $U^{\gamma_A,\gamma_B}=R^{\gamma_A}\otimes R^{\gamma_B}$;
this comparison is therefore non-canonical, and
the corresponding equivalence relations on $\calP(\C^2\otimes\C^2)$ differ with the choice of angles: passing from $\calM$
to $\calM^{\gamma_A,\gamma_B}$ changes the equivalence classes according to the
rule
\[
 [h]_{\calM^{\gamma_A,\gamma_B}}\mapsto
[U^{\gamma_A,\gamma_B}h]_{\calM}.
\]
Thus, varying $(\gamma_A,\gamma_B)$ does not change the quotient space up to homeomorphism, but it ``repartitions'' the underlying pure state space into different indiscernibility classes.

The final theorem translates this line of thought into a precise incompatibility result.

\begin{theorem}\label{thm:incompatibility}
Let $(\gamma_A,\gamma_B)$ and $(\gamma_A',\gamma_B')$ be two pairs of rotation
angles. If
\[
\gamma_A-\gamma_A'\notin \frac{\pi}{2}\Z
\quad\text{or}\quad
\gamma_B-\gamma_B'\notin \frac{\pi}{2}\Z,
\]
then the von Neumann algebras $\calM^{\gamma_A,\gamma_B}$ and
$\calM^{\gamma_A',\gamma_B'}$ are incompatible in the following sense:
\begin{enumerate}[\rm(1)]
 \item $\calM^{\gamma_A,\gamma_B}\not\subseteq \calM^{\gamma_A',\gamma_B'}$
 and
 $\calM^{\gamma_A',\gamma_B'}\not\subseteq \calM^{\gamma_A,\gamma_B}$;
 \item there exist $O\in \calM^{\gamma_A,\gamma_B}$ and
 $O'\in \calM^{\gamma_A',\gamma_B'}$ that do not commute.
\end{enumerate}
Moreover, if $\gamma_A-\gamma_A'\in\frac{\pi}{2}\Z$ and
$\gamma_B-\gamma_B'\in\frac{\pi}{2}\Z$, then
$\calM^{\gamma_A,\gamma_B}=\calM^{\gamma_A',\gamma_B'}$.
\end{theorem}

\begin{proof}[Proof of Theorem \ref{thm:incompatibility}]
Using computational basis notation, write $P_+:=P_3(\{+\})= \ket{0}\!\bra{0}$ and $P_+^{(\gamma)}:=(R^\gamma)^*P_+R^\gamma$.
Note that $P_+^{(\gamma)}$ is the rank--one projection onto the line spanned by
$(R^\gamma)^* \ket{0}$.

A direct calculation gives
\[
P_+^{(\gamma)}=
\begin{pmatrix}
\cos^2\gamma & \sin\gamma\cos\gamma\\
\sin\gamma\cos\gamma & \sin^2\gamma
\end{pmatrix},
\]
and hence
\[
[P_+^{(\gamma_1)},P_+^{(\gamma_2)}]
=\frac12\sin\bigl(2(\gamma_2-\gamma_1)\bigr)
\begin{pmatrix} 0 & 1 \\ -1 & 0\end{pmatrix}
=-\frac12\sin\bigl(2(\gamma_1-\gamma_2)\bigr)
\begin{pmatrix} 0 & 1 \\ -1 & 0\end{pmatrix}.
\]
Hence $[P_+^{(\gamma_1)},P_+^{(\gamma_2)}]=0$ if and only if
$\gamma_1-\gamma_2\in \frac{\pi}{2}\Z$.

Now recall that $P_{3;C}^{\gamma_A,\gamma_B}$ has atoms
\[
P_{3;C}^{\gamma_A,\gamma_B}\bigl(\{(\eps_A,\eps_B)\}\bigr)
= P_{\eps_A}^{(\gamma_A)}\otimes P_{\eps_B}^{(\gamma_B)},
\qquad \eps_A,\eps_B\in\{+,-\},
\]
where \(P_{-}^{(\gamma)}:=I-P_+^{(\gamma)}\) for $\gamma\in\{\gamma_A,\gamma_B\}$.
Hence the von Neumann algebra $\calM^{\gamma_A,\gamma_B}$ generated by the range of
$P_{3;C}^{\gamma_A,\gamma_B}$ is generated by these four tensor-product projections.
In particular,
\begin{align*}
P_+^{(\gamma_A)}\otimes I
&= \bigl(P_+^{(\gamma_A)}\otimes P_+^{(\gamma_B)}\bigr)
   +\bigl(P_+^{(\gamma_A)}\otimes (I-P_+^{(\gamma_B)})\bigr)
\in \calM^{\gamma_A,\gamma_B},
\end{align*}
and similarly $P_+^{(\gamma_A')}\otimes I\in\calM^{\gamma_A',\gamma_B'}$.

Assume first that $\gamma_A-\gamma_A'\notin\frac{\pi}{2}\Z$.
Set
\[
O:=P_+^{(\gamma_A)}\otimes I\in\calM^{\gamma_A,\gamma_B},
\qquad
O':=P_+^{(\gamma_A')}\otimes I\in\calM^{\gamma_A',\gamma_B'}.
\]
Then
\[
[O,O']=[P_+^{(\gamma_A)},P_+^{(\gamma_A')}]\otimes I\neq 0
\]
by the claim, proving (2). To deduce (1), suppose for contradiction that
$\calM^{\gamma_A,\gamma_B}\subseteq \calM^{\gamma_A',\gamma_B'}$.
Then $O\in \calM^{\gamma_A',\gamma_B'}$, and of course $O'\in\calM^{\gamma_A',\gamma_B'}$.
But $\calM^{\gamma_A',\gamma_B'}$ is abelian, as it generated by the commuting range of a PVM,
so $[O,O']=0$, contradiction. Hence
$\calM^{\gamma_A,\gamma_B}\not\subseteq \calM^{\gamma_A',\gamma_B'}$.
The reverse non-inclusion follows by symmetry. The case
$\gamma_B-\gamma_B'\notin\frac{\pi}{2}\Z$ is handled in the same way using $I\otimes P_+^{(\gamma_B)}$.

Finally, if $\gamma_A-\gamma_A'\in\frac{\pi}{2}\Z$ and $\gamma_B-\gamma_B'\in\frac{\pi}{2}\Z$,
then $P_+^{(\gamma+\pi)}=P_+^{(\gamma)}$ and $P_+^{(\gamma+\pi/2)}=I-P_+^{(\gamma)}$,
so the four atoms $P_{\eps_A}^{(\gamma_A)}\otimes P_{\eps_B}^{(\gamma_B)}$ generate the same von Neumann
algebra as the four atoms $P_{\eps_A}^{(\gamma_A')}\otimes P_{\eps_B}^{(\gamma_B')}$ (up to relabelling $\eps_A,\eps_B$).
Thus $\calM^{\gamma_A,\gamma_B}=\calM^{\gamma_A',\gamma_B'}$.
\end{proof}

In particular, unless both angle differences lie in $\frac{\pi}{2}\Z$, there is no \emph{single abelian}
von Neumann algebra $\mathcal A\subseteq\calL(\C^2\otimes\C^2)$ such that
$\mathcal M^{\gamma_A,\gamma_B}\cup \mathcal M^{\gamma_A',\gamma_B'}\subseteq\mathcal A$; hence no single
commutative Holevo lift can cover both settings without adjoining the setting choice as an extra classical variable.

\section{Outlook}
In the famous experiment performed by Alain Aspect and his collaborators \cite{Aspect},
the polariser angles $\gamma_A$ and $\gamma_B$ are not fixed constants, but are randomly selected
from the sets $\{a_1,a_2\}$ and $\{b_1,b_2\}$ in each run. This can be modelled by considering
$\gamma_A$ and $\gamma_B$ as independent random variables
$\gamma_A : \Omega_{\rm pol} \to \{a_1,a_2\}$ and $\gamma_B : \Omega_{\rm pol} \to \{b_1,b_2\}$ on
a probability space for the polariser angles
$(\Omega_{\rm pol},\P_{\rm pol})$, with
\[
\P_{\rm pol}(\gamma_A=a_1)=\P_{\rm pol}(\gamma_A=a_2)=\frac12,
\qquad
\P_{\rm pol}(\gamma_B=b_1)=\P_{\rm pol}(\gamma_B=b_2)=\frac12.
\]
Conditioned on a fixed realisation $(\gamma_A,\gamma_B)=(a_i,b_j)$, one obtains one of the four
``sub-experiments'' analysed above; for each such fixed setting our results provide a classical
(Holevo) description of the corresponding measurement statistics.

Theorem~\ref{thm:incompatibility} shows, however, that different settings typically correspond to
\emph{incompatible} observable algebras. From Charlie's perspective -- who keeps track of which
setting was used in each run -- this obstructs a description of the \emph{combined} four-setting
experiment within a single Holevo space without adjoining the setting choice as an extra
classical variable. While each fixed setting admits a Holevo space,
there is in general no canonical way to ``glue'' these descriptions into one `global Holevo space'
that is valid uniformly across all settings.

From the local viewpoints the situation is complementary. Theorem~\ref{thm:hide-2-intrinsic} shows
that the local marginal statistics accessible to Alice and Bob do not reveal the chosen pair
$(\gamma_A,\gamma_B)$. In particular, without additional information about the settings, the
two local perspectives cannot simply be merged into one global account of the experiment.

These two observations motivate the companion paper \cite{NeeWaa}. This paper
treats the setting choices themselves as part of the probabilistic description and constructs an
explicit classical probability space for the full four-setting protocol (including both the
random variables $\gamma_A,\gamma_B$ and the measurement outcomes) which reproduces the conditional
distributions of the four sub-experiments. In this sense \cite{NeeWaa} is a continuation of
the present paper: it extends the fixed-setting analysis to a single classical model for the randomised multiple-run Bell experiment.

At the same time, \cite{NeeWaa} differs in spirit from the present paper. Whereas the present paper is primarily a
structural study of indiscernibility, the focus of \cite{NeeWaa} is interpretational: the global construction is used as a
tool to reassess which locality/causality assumptions are doing the work in Bell-type arguments, and to discuss how such assumptions are (and are not) supported by the usual relativistic
motivations.

\bigskip
{\em Acknowledgment} -- We thank one of the referees for pointing out some mathematical oversights in the original submission. We used several versions of GPT as a brainstorming partner helping us to uncover proof strategies for some of the theorems and to improve the presentation of the paper.

\bibliographystyle{plain}
\bibliography{indiscernible}

\end{document}